\def\set#1{\{#1\}}
\def\Set#1{\left\{#1\right\}}
\def\Setof#1#2{\Set{\begin{array}{c|l}#1&#2\end{array}}}
\def\intset#1#2{\set{#1,\ldots, #2}}
\def\sizeof#1{|#1|}
\def\sizeinof#1#2{|#1|_{#2}}
\def\arof#1{\left[\!\left[#1\right]\!\right]}
\def\imp{\to}
\def\cons#1{\left[#1\right]}
\def\chole{\circ}
\def\conso{\cons\chole}
\newcommand{\CwH}[1][]{\mathsf{CwH_{{#1}}}}
\def\alphaeq{=_\alpha}
\def\FV#1{\texttt{FV}(#1)}
\def\normTerms{\widehat{ \Lambda}}
\def\nfset{\nf{\Lambda}{\beta \eta \kappa}}
\def\as{type assignment\xspace}
\def\are{\mathsf A}
\def\Den#1{\left\{\!\left\{#1\right\}\!\right\}}
\def\evensym{\circ}
\def\oddsym{\bullet}
\def\iseven#1{#1^{\evensym}}
\def\isodd#1{#1^{\oddsym}}
\def\strat{\mathcal S}
\def\add#1{\mathsf{add}({#1})}
\newcommand{\addi}[2]{\mathsf{add}^{#1}({#2})}
\def\addsize#1{{\color{modcolor} h}_{#1}}
\def\emptyaddress{\epsilon}
\def\depsym{{\color{arenacolor} d}}
\def\dep#1{\depsym(#1)}
\def\evenmove{$\evensym$-move\xspace}
\def\oddmove{$\oddsym$-move\xspace}
\def\jpath{\mathsf p}
\def\view{\mathsf{p}}
\def\viewi#1{\view_{#1}}
\def\fviewof#1{\mathcal F(#1)}
\def\emptyseq{\epsilon}
\newcommand{\axeq}[1][]{\mkern1mu\mathord{\stackrel{#1}{{\color{blue}\sim}}}\mkern1mu}
\newcommand{\strateq}[2]{\axeq[{#1}_{#2}]}
\newcommand{\strateqone}[2]{\axeq[{#1}_{#2}]_1}
\def\emptyseq{\epsilon}
\def\dD{\mathcal{D}}
\def\CK{\mathsf{CK}}
\def\SCK{\mathsf{SCK}}
\def\FCK{\mathsf{CK^F}}
\def\NDCK{\mathsf{ND}_{\CK}}
\def\dnbox{\mathsf{\lbox\text{-}subst}}
\def\taxrule{\mathsf{ax}} 		
\def\AXrule{\mathsf{ax}}
\def\exrule{\mathsf{ex}}
\def\Crule{\mathsf{C}}
\def\Wrule{\mathsf{W}}
\def\cutr{\mathsf{cut}}
\def\rimprule{\imp^{\mathsf{R}}}
\def\limprule{\imp^{\mathsf{L}}}
\def\kbrule{\mathsf{K}^\lbox}
\def\Idrule{\mathsf {Id}}
\def\impintro{ \mathsf{Abs}}
\def\impelim{ \mathsf{App} }
\def\atoms{\mathcal A}
\def\variables{\mathcal V}
\def\typed#1#2{#1^{#2}}
\def\tuple#1{\langle #1\rangle }
\def\pair#1#2{\tuple{#1, #2}}
\def\tlambda#1#2{\lambda {#1}^{#2}}
\newcommand{\lambdalet}[3]{\mathsf{Let}\;#1\;\mathsf{be}\;#2\;\mathsf{in}\;#3}
\newcommand{\llambda}[3][]{\lambda \typed{#2}{#1}.{#3}}
\def\boxpar#1{\left[#1\right]_\blacksquare}
\def\substpar#1{\left\{#1\right\}}
\def\subst#1#2{\substpar{#1/#2}}
\def\lambdabox#1#2#3{ #1 \boxpar{#2/#3}}
\def\etao#1{ \|#1\|_\eta^1 }
\def\etat#1{ \|#1\|_\eta^2 }
\def\etam#1{ \|#1\|_\eta}
\def\sizeo#1{\etao{#1}}
\def\sizet#1{\etat{#1}}
\def\kappam#1{  \|#1\|_\kappa  }
\def\nf#1#2{  \mathsf{nf}_{#2} (#1)   }
\def\mlt{modal $\lambda$-term\xspace}
\def\mlts{modal $\lambda$-terms\xspace}
\def\mltset{\Lambda}
\def\smltset{\Lambda^\blacksquare}
\def\amltset{\Lambda^\lambda}
\def\subterm#1{\mathsf{Sub}(#1)}
\newcommand{\bred}[1][]{\rightsquigarrow_{\beta_{#1}}}
\newcommand{\ered}[1][]{\rightsquigarrow_{\eta_{#1}}}
\newcommand{\kred}[1][]{\rightsquigarrow_{\kappa_{#1}}}
\def\eredo{\ered[1]}
\def\eredt{\ered[2]}
\def\bered{\rightsquigarrow_{\beta\eta}}
\def\bkred{\rightsquigarrow_{\beta\kappa}}
\def\bekred{\rightsquigarrow_{\beta\eta\kappa}}
\def\reflc#1{\rightsquigarrow^=_{#1}}
\def\transc#1{\rightsquigarrow^+_{#1}}
\def\rtransc#1{   \rightsquigarrow^\ast_{#1}   }
\def\cC{ \mathbf  C} 
\def\cE{ \mathbf{E}}
 \def\cD{ \mathbf{D}}
\def\WIS{$\mathsf{WIS}$\xspace}
\def\WISs{$\mathsf{WIS}$s\xspace}
\def\CKWIS{$\CK$-\WIS}
\def\CKWISs{$\CK$-\WISs}
\def\quand{\quad\mbox{and}\quad}
\def\rclr#1{{\color{red}#1}}
\def\gG{\mathcal G}
\def\gH{\mathcal H}
\newcommand{\vertices}[1][]{V_{#1}}
\newcommand{\dedge}[1][]{\mkern1mu\mathord{\stackrel{#1}{\curvearrowright}}\mkern1mu}
\newcommand{\iedge}[1][]{\mkern1mu\mathord{\stackrel{#1}{{\color{arenacolor}\to}}}\mkern1mu}
\newcommand{\medge}[1][]{\mkern1mu\mathord{\stackrel{#1}{{\color{modcolor}\rightsquigarrow}}}\mkern1mu}
\newcommand{\oset}[3][.8ex]{%
	\mathrel{\mathop{#3}\limits^{
			\vbox to#1{\kern-.5\ex@
				\hbox{$\scriptstyle#2$}\vss}}}
}
\newcommand{\irof}[1][]{\smash{\oset{\iedge}{R}_{\!#1}}}
\def\ddag{\textsf{2-dag}\xspace}
\def\ddags{\textsf{2-dag}'s\xspace}
\def\labelset{\mathcal L}
\def\labsym{\ell}
\def\lab#1{\labsym(#1)}
\newcommand{\ssbox}{\lbox}
\newcommand{\singlevertex}[1][]{{#1}}
\newcommand{\RtoR}[2]{\left(\irof[#1]\dedge\irof[#2]\right)}
\def\gimp{{\color{arenacolor}-\hspace{-2pt}\lcoseq}}
\def\gmod{{\color{modcolor}\sim\hspace{-4.2pt}\lcoseq}}
\def\lcalc{$\lambda$-calculus\xspace}
\def\mlcalc{modal $\lambda$-calculus\xspace}
\def\letcon{\mathsf{Let}}
\def\defn#1{\emph{#1}}
\def\myvec#1{\vv#1}
\def\myparagraph#1{\noindent\textbf{#1}.}
\begin{document}

%
\title{
	Canonicity of Proofs in Constructive Modal Logic
	\thanks{
		The first author  is supported by 
		Villum Fonden, grant no. 50079.
		The second author is supported by
		the PRIN project
		RIPER (No. 20203FFYLK)
		The third author  is supported by 
		the US Air Force Office for Scientific Research under award number FA9550-21-1-0007.
	}
}
%
%
\author{
	Matteo Acclavio		\inst{1}\and
	Davide Catta		\inst{2}\and
	Federico Olimpieri	\inst{3}	
}
\authorrunning{M. Acclavio et al.}
%
\institute{
	University of Southern Denmark, Odense, Denmark 
\and
	Università degli studi di Napoli, Federico II, Naples, Italy
\and
	University of Leeds, Leeds, UK
}
\maketitle              
\begin{abstract}
	In this paper we investigate the Curry-Howard correspondence for constructive modal logic in light of the 
	gap between the proof equivalences enforced 
	by the lambda calculi from the literature and 
	by the recently defined winning strategies 
	for this logic.
	
	We define a new lambda-calculus for a minimal constructive modal logic 
	by enriching the calculus from the literature with additional reduction rules
	and we prove normalization and confluence for our calculus.
	We then provide a typing system in the style of focused proof systems allowing us to provide a unique proof for each term in normal form, and we use this result to show a one-to-one correspondence between terms in normal form and winning innocent strategies.
\keywords{
	Constructive Modal Logic, 
	Lambda Calculus, 
	Game Semantics
}
\end{abstract}

\section{Introduction}

Proof theory is the branch of mathematical logic whose aim is studying the properties of logical arguments (i.e., proofs) as well as the structure of proofs and their invariants.
For this purpose, 
the most used representations of proofs are based on tree-like data structures inductively defined using inference rules of a proof system.%
\footnote{
	It is worth noting that some proof systems 
	(in the sense of \cite{cook:reckhow:79}) 
	allows to represent proofs using structures such as infinite trees (for non-well-founded proof systems, see, e.g., \cite{das:pous:non-wellFULL}), graphs (see proof nets \cite{girard:87,girard:96:PN}, combinatorial proofs \cite{hughes:pws,hughes:pws} or  proof diagrams \cite{acc:PD}) or structures defined in a compositional way (see open deduction \cite{gug:gun:par:2010} and deep inference \cite{tub:str:esslli19})
}
 \emph{Natural deduction} and \emph{sequent calculus} are among the most used proof systems due to their intuitive representation.
Both these proof systems were originally devised by Gentzen in order to prove the consistency of first-order arithmetic. 
Their versatility resulted in their employment for a wide variety of logics.

However, having formalisms able to represent proofs is not enough to define ``what is a proof'' since different derivations, or derivations in different proof systems, could represent the same abstract object.
A notion of \emph{proof identity} is therefore required to define a proof as a proper mathematical entity \cite{dosen:BSL}.
Such a notion of identity is provided by delineating the conditions under which two distinct formal representations of a proof represent the same logical argument. 
The definition of these conditions are often driven by semantic considerations (by performing specific transformations on two derivations, they can be transformed to the same object) or intuitive ones (two derivations only differ for the order in which the same rules are applied to the same formulas).

Natural deduction is often considered a satisfactory formalism since it allows to define a more canonical representation of proofs with respect to sequent calculus:  sequent  calculus derivations differing because of some rules permutations are represented (\textit{via} a standard translation) by the same  natural deduction derivation.
Moreover, natural deduction provides a one-to-one correspondence 
between derivations and lambda-terms, 
called the \emph{Curry-Howard correspondence} \cite{sorensen2006lectures}.
%

\myparagraph{Constructive Modal Logic}
Classical modal logics are obtained by extending \emph{classical logic} with unary operators, called \emph{modalities}, that qualify the truth of a judgment.
The most used modalities are the $\lbox$ (called \emph{box}) and its dual operator $\ldia$ (called \emph{diamond}) which are usually interpreted as \emph{necessity} and  \emph{possibility}.
According to the interpretation of such modalities, modal logics find applications, for example, in knowledge representation~\cite{knowledgerepresentation}, artificial intelligence~\cite{MEYERAI} and the formal verification of computer programs~\cite{Pnueli77,EmersonC82,Kozen83}.
The work of Fitch~\cite{fitch:48} initiated the investigation of the proof theory of modal logics extending intuitionistic logic, leading to numerous results on the topic \cite{prawitz:65,heilala2007bidirectional,mend:multimodal,fairtlough1997propositional,kojima2012semantical}.

In particular, the Curry-Howard correspondence has been extended to various constructive modal logics \cite{AlechinaMendlerDepaiva,bel:deP:rit:extended,kakutanilambda,Kavvos20,pfenningS4,pfenningStaged}. 
Intuitionistic logic can be extended with modalities in different ways (for an overview see  \cite{simpson:phd}):
while in classical logic axioms involving only $\lbox$ provide also description of the behavior of $\ldia$, for intuitionistic logic this is no more the case since the duality of the two modalities does not hold anymore.
This leads to different approaches.
\emph{Constructive modal logics} consider minimal sets of axioms to guarantee the definition of the behaviors of the $\lbox$ and $\ldia$ modalities.
A second approach, referred to as \emph{intuitionistic modal logic}, considers additional axioms in order to validate the G\"{o}del-Gentzen translation  \cite{das:mar:blog}.
In this work we consider a minimal fragment of the constructive modal logic $\CK$ 
only containing the implication $\imp$ and the modality $\lbox$.
This fragment is enough to define types for a \lcalc with a $\letcon$ constructor \cite{AlechinaMendlerDepaiva}
which can be interpreted as an explicit substitution
and, for this reason,
we more concisely denote by 
$\lambdabox{N}{M_1,\ldots M_n}{x_1,\ldots,x_n}$ 
instead of 
$\lambdalet {M_1,\ldots M_n} {x_1,\ldots,x_n} N $.

Recent works on the the proof equivalence of constructive modal logics \cite{acc:str:AIML22} expose a complexity gap between the proof equivalences induced by the natural deduction (\cite{bel:deP:rit:extended}) and winning innocent strategies (\cite{acc:cat:str:games}) for this logic.
This discrepancy cannot be observed in intuitionistic propositional logic where there are one-to-one correspondences between natural deduction derivations, lambda terms and innocent winning strategies.
In particular, in the logic $\CK$ we observe
sequent calculus proofs 
which correspond to the same winning strategy 
but which cannot be represented by the same natural deduction derivation in the systems provided in 
\cite{bel:deP:rit:extended,kakutanilambda}
(or equivalently corresponding to different \mlts).
By means of example, 
consider the  terms  
$\lambdabox{x}{z}{x}$
and
$\lambdabox{x}{z,w}{x,y}$
and their (unique) typing derivations shown in \Cref{fig:introDN}
(see \Cref{fig:typing} for the typing system).
\begin{figure}
	\centering
	\adjustbox{max width=.9\textwidth}{$\begin{array}{c@{\qquad}c@{\qquad}c}
			\vlderivation{
				\vliin{\dnbox}{}{
					z: \lbox a, w: \lbox b\vdash \lambdabox{x}{z}{x} : \lbox a
				}{ 
					\vlin{\Idrule}{}{z: \lbox a, w: \lbox b\vdash z:\lbox a}{\vlhy{}}
				}{
					\vlin{\Idrule}{}{x: a, y: b \vdash x : a}{\vlhy{}}
				}
			}
			\\
			\vlderivation{
				\vliiin{\dnbox}{}{
					z: \lbox a, w: \lbox b \vdash \lambdabox{x}{z,w}{x,y} : \lbox a
				}{ 
					\vlin{\Idrule}{}{z: \lbox a, w: \lbox b \vdash z:\lbox a}{\vlhy{}}
				}{
					\vlin{\Idrule}{}{z: \lbox a, w: \lbox b \vdash w:\lbox b}{\vlhy{}}
				}{
					\vlin{\Idrule}{}{x: a, y: b \vdash x : a}{\vlhy{}}
				}
			}
		\end{array}$}
	\caption{The typing derivations of the \mlts $\lambdabox{x}{z}{x}$
		and
		$\lambdabox{x}{z,w}{x,y}$.}
	\label{fig:introDN}
\end{figure}
Intuitively, 
the two terms 
$\lambdabox{x}{z}{x}$
and
$\lambdabox{x}{z,w}{x,y}$
should be semantically equivalent 
since
the explicit substitution of the variable $y$ in the term $x$
is vacuous.
Said differently, if we explicit the substitution encoded by the constructor $\letcon$, 
both terms 
$\lambdabox{x}{z}{x}$ and $\lambdabox{x}{z,w}{x,y}$
should reduce to the term $z$.

In fact, this undesirable behavior disappear when considering the Winning Innocent Strategies for $\CK$ defined in \cite{acc:cat:str:games}.
In this syntax, both the above natural deduction derivations correspond 
to the same strategy below.
\begin{equation}\label{ex:WIS}
	\strat=\Set{\emptyseq, \iseven a , \iseven a\isodd a}
	\quad
	\mbox{
	over the arena
	}
	\quad
	\arof{\lbox a, \lbox b \vdash \lbox a}
	\quad=\quad
	\begin{array}{c@{\qquad}c@{\qquad}c}
		\vlbox3	&	\vlbox1			&	\vlbox0
		\\[10pt]
		\vb1	&	\isodd{\va1}	&	\iseven{\va0}
		\\
	\end{array}
	\multiAedges{lbox1,a1}{lbox0,a0}
	\bentAedges{lbox3/lbox0/20,b1/a0/20,lbox3/a0/20,b1/lbox0/0}
	\Medges{lbox1/a1,lbox3/b1,lbox0/a0}
\end{equation}

\myparagraph{Contribution}
In this paper we define a new \mlcalc for $\CK$ by considering additional rewriting rules that allow us to retrieve a one-to-one correspondence between terms in normal form and winning innocent strategies, that is, providing more canonical representatives for proofs with respect to natural deduction and \mlts defined in the literature.
From the technical point-of-view, 
we obtain this result by
extending 
the operational semantics of the \mlcalc with the appropriate new reduction rules for the explicit substitution encoded by the $\letcon$, 
dealing with contraction and weakening operating on the variables bound by the $\letcon$.
We call this set of rules the $\kappa$-reduction, which we show to be strongly normalizing using elementary combinatorial methods. 
In order to deal with the interaction of the $\eta$-reduction with $\beta$-reduction, we define a restricted $\eta$-reduction following an approach similar to the one used in \cite{mints:closed,kes:fix,jay:eta}.
We prove strong normalization and confluence for our new operational semantics.

After proving confluence and strong normalization for our \mlcalc, 
we provide a canonical typing system inspired by focused sequent calculi (see, e.g., \cite{andreoli:01}) providing a unique typing derivation for each term in normal form.
We conclude by establishing a one-to-one correspondence between the winning strategies defined in \cite{acc:cat:str:games}
and 
proofs of this calculi,
therefore with terms in normal form.

\myparagraph{Related Work} 
To the best of our knowledge, 
the first paper proposing a Curry-Howard correspondence for the logic $\CK$ is \cite{bel:deP:rit:extended}.
In this work, the authors provide a natural deduction system for the logic $\CK$ by enriching the standard system for intuitionistic propositional logic with a generalized elimination rule capable of taking into account the behavior of the $\lbox$-modality.
At the level of lambda calculus, they enrich the syntax of terms by adding a new constructor $\letcon$ defined as follows:
\def\lambdafoot#1#2#3{\letcon \;#1\; \mathsf{be}\; #2 \;\mathsf{in}\;#3\;}
\begin{equation}\label{eq:let}
	\adjustbox{max width=.9\textwidth}{$
		\lambdafoot{x_1,\ldots x_n}{N_1, \ldots, N_n}{M}
		\quad
		\left(
		\mbox{which we denote $M \boxpar{N_1,\ldots, N_n/x_1,\ldots, x_n}$}
		\right)
	$}
\end{equation}
providing a notation which can be interpreted as an explicit substitution of the variable $x_i$ with the term $N_i$ for all occurrences of $x_1 \ldots, x_n$ inside a term $M$.
For this calculus, the authors only consider the usual $\eta$ and $\beta$ reductions plus the following reduction:
\begin{equation*}
	\adjustbox{max width=.95\textwidth}{$
		\begin{array}{c}
			\lambdafoot{y}{P}{(\lambdafoot{x}{N}{M})} 
			\rightsquigarrow
			\lambdafoot{x}{(\lambdafoot{y}{P}{N})}{(\lambdafoot{x}{x}{M})}
			\\
			\left(
			\mbox{in our syntax this reduction is written as}
			\quad
			\lambdabox{\lambdabox{M}{N}{x}}{P}{y} 
			\rightsquigarrow
			\lambdabox{\lambdabox{M}{x}{x}}{\lambdabox{N}{P}{y}}{x}
			\right)
		\end{array}
	$}
\end{equation*}

In \cite{kakutanilambda} the author considers the usual $\eta$ and $\beta$ reduction with 
an
the following 
additional $\beta$-reduction rule specifically designed to handle the explicit substitution construct. 
\begin{equation}
	\lambdabox{M}{
		\myvec{P},\rclr{ \lambdabox{R}{\myvec{N}}{\myvec{z}}}, \myvec{Q}
	}{
		\myvec{x},\rclr{y},\myvec{w}
	} 
	\bred[2] 
	\lambdabox{
		M\subst{R}{y}
	}{
		\myvec{P},\myvec{N},\myvec{Q}}{\myvec{x},\myvec{z},\myvec{w}
	}
\end{equation}
In the same paper, the author provides a detailed proof of 
strong normalization and confluence for modal lambda terms with respect to the standard $\eta$ and $\beta$ reduction, plus this new $\beta_2$ reduction.
However, also this calculus does not manage to fix the aforementioned problem with canonicity.

An alternative natural deduction system (and $\lambda$-calculus) is proposed in \cite{Kavvos20}, where the symmetry between elimination and introduction rules typical of natural deduction is restored.
However, this result requires to define a sequent calculus 
where
sequents have a more complex structure (dual-contexts),
and lacks an in-depth study of the operational semantics because the $\eta$-expansion is not considered in the calculus.

\textbf{Outline of the paper.}
In \Cref{sec:intro} we recall the definition of the fragment of the logic $\CK$ we consider in this paper, as well as the main results on the proof theory for this logic, its natural deduction and lambda calculus.
In \Cref{sec:new} we define the \mlcalc we consider in this paper, proving its strong normalization and confluence properties.
In \Cref{sec:foc} we provide a typing system in the style of focused sequent calculi, where we are able to narrow the proof search of the type assignment of our normal terms to a single derivation.
In \Cref{sec:games} we recall the definition of the game semantics for the logic we consider and we prove the one-to-one correspondence between terms in normal form and winning strategies.

\section{Preliminaries}\label{sec:intro}

In this section we recall the definition of the (fragment of the) constructive modal logic $\CK$ we consider in this paper, and we recall the definition and some terminology for \mlts.
We are interested in a minimal constructive modal logic whose \defn{formulas} are defined from a countable set of propositional variables $\mathcal{A}=\set{a,b,c,\ldots}$ using the following grammar:
\begin{equation}\label{eq:form}
	A \coloneqq  a \mid   ( A\imp A)       \mid \lbox A\ 
\end{equation}

We say that a formula is \defn{modality-free} if it contains no occurrences of the modality $\lbox$. 
A formula is a \defn{$\imp$-formula} if it is  of the form $A\imp B$. 
In the following  
we use  Krivine's convention~\cite{KrivineBook} and write $(A_1, \ldots, A_n )\to C$ as a shortcut for $(A_1\to( \cdots \to (A_n \to C)\cdots ))$
A \defn{sequent} is an expression $\Gamma\vdash C$ where $\Gamma$ is a finite (possibly empty) list of formulas and $C$ is a formula. 
If $\Gamma=A_1,\ldots ,A_n$ and $\sigma$ a permutation over $\intset1n$, then
we may write $\sigma(\Gamma)$ to denote $A_{\sigma(1)},\ldots ,A_{\sigma(n)}$.

In this paper we consider the logic $\CK$ 
defined by extending 
the conjunction-free and disjunction-free fragment of
intuitionistic propositional logic
with the modality $\lbox$ whose behavior is defined by the \defn{necessitation rule} and the axiom $\mathsf K_1$ below.
$$
\mathsf{Nec}\coloneqq \mbox{if $A$ is provable, then also $\lbox A$ is}
\qquad\qquad
\mathsf K_1\coloneqq \lbox(A\imp B) \imp (\lbox A \imp \lbox B)
$$
The sequent calculus $\SCK$, whose rules are provided in \Cref{fig:seqCK}, is a sound and complete proof system for the logic $\CK$. This system have been extracted from the one presented in \cite{kuz:mar:str:Justification} and satisfies cut-elimination.

\begin{figure}[t]
	\centering
	\def\myskip{\hskip1em}
	\adjustbox{max width=\textwidth}{$\begin{array}{c@{\myskip}c@{\myskip}c@{\myskip}c@{\myskip}c@{\myskip}c@{\myskip}c@{\myskip}|@{\myskip}c}
			\vlinf{\AXrule}{}{a\vdash a}{}
			&
			\vlinf{\exrule}{}{\sigma(\Gamma)\vdash C}{\Gamma\vdash C}
			&
			\vlinf{\rimprule}{}{\Gamma\vdash A\imp C}{\Gamma ,A\vdash C}
			&
			\vliinf{\limprule}{}{\Gamma, \Delta,A\imp B\vdash C}{\Gamma \vdash A}{B, \Delta \vdash C}
			\\\\
			\vlinf{\kbrule}{}{\lbox \Gamma \vdash \lbox  A}{\Gamma \vdash A}
			&
			\vlinf{\Wrule}{}{\Gamma, A\vdash C}{\Gamma\vdash C}
			&
			\vlinf{\Crule}{}{\Gamma, A\vdash C}{\Gamma, A, A\vdash C}
			&
			\vliinf{\cutr}{}{\Gamma , \Delta\vdash C}{\Gamma \vdash A}{\Delta, A\vdash C}
		\end{array}$}
	\caption{
		Sequent calculus rules of the sequent system $\SCK$,
		where $\sigma$ is a permutation over $\intset1n$
	}
	\label{fig:seqCK}
\end{figure}

\subsection{A Lambda Calculus for $\CK$}\label{subsec:DN}

\begin{figure}[t]
	\adjustbox{max width=\textwidth}{$
		\begin{array}{c}
			\vlinf{\Idrule}{i\in\intset1n}{x_1:A_1,\ldots, x_n:A_n \vdash x_i:A_i }{}
			\qquad
			\vlinf{\impintro}{}{\Gamma \vdash \tlambda{x}{}. M :  A\imp C}{\Gamma,x :  A \vdash M  :  C}
			\qquad
			\vliinf{\impelim}{}{\Gamma\vdash MN : C}{\Gamma\vdash N :   A}{\Gamma\vdash M : A\imp C}
			\\[15pt]
			\vliiiinf{\dnbox}{
				x_1,\ldots, x_n \text{ do not occur in $\Gamma$}
			}{ \Gamma \vdash \lambdabox{M}{N_1,\ldots ,N_n}{x_1,\ldots ,x_n} :  \lbox C
				
			}{ 
				\Gamma \vdash N_1  :  \lbox A_1
			}{\cdots}{
				\Gamma \vdash N_n  :  \lbox  A_n 
			}{
				x_1 :  A_1,\ldots, x_n  :  A_n \vdash M :  C
			}
		\end{array}
	$}
	\caption{
		Typing rules in the natural deduction system $\NDCK$ for \mlts.
	}
	\label{fig:typing}
\end{figure}

The set of (untyped) \defn{\mlts}
is defined inductively from 
a countable set of \defn{variables} $\variables=\set{x,y,\ldots}$ using the following grammar:%

\begin{equation*}
	\adjustbox{max width=\textwidth}{$
	M,N 			\coloneqq
	x 				\mid 
	\llambda{x}{M} 	\mid  
	(M N)	 		\mid 
	M \boxpar{\myvec N/\myvec x}
	\mbox{ where }
	\begin{cases}
		\myvec N=N_1,\ldots, N_n \mbox{ is a list of terms and}
		\\
		\myvec x = x_1,\ldots x_n \mbox{ is a list of distinct variables.}
	\end{cases}
	$}
\end{equation*}

\noindent modulo the standard $\alpha$-equivalence (denoted $\alphaeq$, see \cite{barendregtbookwithtypes})
and 
modulo the equivalence generated by the following permutations (for any $\sigma$ permutation over the set $\intset1n$) over the order of substitutions in the $\boxpar{\cdot/\cdot}$ constructor:
%
\begin{equation*}
	\adjustbox{max width=\textwidth}{$
		\begin{array}{c}
			\boxpar{\myvec N/\myvec x}
			\coloneqq
			\boxpar{N_1,\ldots, N_n/x_1,\cdots , x_n}
			=
			\boxpar{N_{\sigma(1)},\ldots, N_{\sigma(n)}/x_{\sigma(1)},\ldots , x_{\sigma(n)}}
			=:
			\boxpar{\sigma(\myvec N)/\sigma(\myvec x)}
		\\
			\mbox{for any $\sigma$ permutation over $\intset1n$.}
		\end{array}
	$}
\end{equation*}
%
As usual, application associates to the left, and has higher precedence than abstraction. 
For example, $\lambda xyz.xyz := \lambda x.(\lambda y.(\lambda z.((xy)z)))$. 
A \mlt is  a \defn{(explicit) substitution} if it is of the form $\lambdabox{M}{\myvec{N}}{\myvec{x}}$,
an \defn{application} if of the form $MN$,
and a \defn{$\lambda$-abstraction} if of the form $\lambda x.M$.

The set of \defn{subterms} of a term $M$ (denoted $ \mathsf{SUB}(M)$) 
is defined as follows:
\begin{equation*}
	\adjustbox{max width=\textwidth}{$
		\begin{array}{l}
			\subterm{x} = \set{  x }
			\quad ,\quad
			\subterm{\lambda x . M} = \subterm M \cup \set{ \lambda x . M }  
			\quad ,\quad
			\subterm{MN} = \subterm M \cup  \subterm N \cup \set{ MN }
			\;,
			\\
			\subterm{\lambdabox{M}{N_1, \dots, N_n}{x_1, \dots, x_n}} = 
			\subterm M 
			\cup 
			\left(\bigcup_{i\in\intset1n} \subterm{N_i}\right) 
			\cup
			\set{ \lambdabox{M}{N_1, \dots, N_n}{x_1, \dots, x_n}}
			\;.
		\end{array}
	$}
\end{equation*}

Its \defn{length} $\sizeof M$ and its set of \defn{free variables} $\FV M$ are defined as:
%
\begin{equation*}
	\adjustbox{max width=\textwidth}{$
		\sizeof M
		=
		\begin{cases}
			0								&\mbox{if } M=x					\\
			\sizeof N+1						&\mbox{if } M=\lambda x.N		\\
			\max\set{\sizeof N, \sizeof P}+1&\mbox{if } M=NP				\\
			\max\set{\sizeof N,\sizeof{P_1},\ldots,\sizeof{P_n}}+1&
			\mbox{if } M=\lambdabox{N}{\myvec{P}}{\myvec{x}}	\\
		\end{cases}
		\hskip2em
		\FV M
		=
		\begin{cases}
			\set x							&\mbox{if } M=x					\\
			\FV{N}\setminus\set{x}			&\mbox{if } M=\lambda x.N		\\
			\FV{N}\cup \FV{P}				&\mbox{if } M=NP				\\
			\bigcup_{i} \FV{P_i}			&\mbox{if } M=\lambdabox{N}{\myvec{P}}{\myvec{x}}							\\
		\end{cases}
	$}
\end{equation*}
We denote $\sizeinof M x$ the number of the occurrences of the free variable $x$ in a term $M$ and we may write $\sizeinof M x=0$ if $x\notin\FV M$
and we say that 
a term $M$ is \defn{linear} in the variables ${x_1, \ldots, x_n}$ if $\sizeinof M {x_i}=1$ for all $i\in\intset1n$.
We denote by $M\subst {N_1,\ldots, N_n}{x_1,\ldots, x_n}$ the result of the standard capture avoiding substitution of the occurrences of the variable $x_1,\ldots, x_n$ in $M$ with the term $N_1,\ldots, N_n$ respectively (see, e.g., \cite{terese}).

A \defn{variable declaration} is an expression 
$x :  A$
where $x$ is a variable and $A$ is a \defn{type}, that is, a formula as defined in \Cref{eq:form}. 
A \defn{(typing) context} is a finite list $\Gamma\coloneqq x_1: A_1,\ldots, x_n: A_n$ of distinct variable declarations.
Given a context 
$\Gamma= x_1 : A_1,\ldots, x_n : A_n$, 
we say that a variable $x$ \defn{appears} in $\Gamma$ if $x=x_i$ for a $i\in\intset1n$
and
we denote by $\Gamma, y : B$ the context ${ x_1 : A_1,\ldots, x_n : A_n,y: B}$ implicitly assuming that $y$ does not appear in $\Gamma$.
A \defn{\as} is an expression of the form $\Gamma \vdash M : A$ where $\Gamma$ is a context, $M$ a \mlt and $A$ a type.

\begin{definition}\label{def:derivation}
	Let $\Gamma\vdash M : A$ be an \as.
	A \defn{typing derivation} (or \defn{derivation} for short) 
	of $\Gamma\vdash M : A$ in $\NDCK$ 
	is a finite tree of \as 
	constructed using the rules in \Cref{fig:typing} in such a way 
	it has root $\Gamma \vdash M: A$
	and 
	each leaf is the conclusion of a $\Idrule$-rule.
	A \as is \defn{derivable} (in $\NDCK$) if there is a derivation with conclusion the given \as.

	We denote by $\mltset$ (resp. by $\smltset$ and $\amltset$) the set of \mlts (resp.~the set of substitutions and $\lambda $-abstractions in $\mltset$) admitting a derivable type assignment in $\NDCK$.
\end{definition}



\section{A New Modal Lambda Calculus}\label{sec:new}

In this section we define a new modal lambda calculus by enriching the operational semantics of the previous calculi with additional reduction rules aiming at recovering canonicity,
proving confluence and strong normalization properties. 


To define our term rewriting rules, 
we require special care when they are applied in a proper sub-term. This is due to the fact that the explicit substitution encoded by $\boxpar{\cdot/\cdot}$ could capture free variables.
For this reason, we introduce the notion of 
\defn{term with a hole} 
as a 
term of the form $\cC\conso$ containing a single occurrence of a special variable $\chole$.
More precisely, 
the set $\CwH$ of terms with a hole
and the two sets 
$\CwH[\eta_1]$ and $\CwH[\eta_2]$ 
of specific terms with a hole
are defined by the following grammars:
%
\begin{equation*}
	\adjustbox{max width=\textwidth}{$
		\begin{array}{c}
		\begin{array}{l@{\;:\;}c@{\;\coloneqq\;}r@{\;\mid\;}c@{\;\mid\;}ll@{\;\mid\;}l@{\mid\;}l@{\;}l}
			\CwH									&	
			\cC\conso 								&
			\chole		 							&
			\lambda x . \cC\conso 					&
			M\cC\conso 							\mid&
			\cC\conso M 							&
			\cC\conso [\myvec{M} / \myvec{x} ]			&
			\lambdabox{M}{ \myvec{N}_1 , \cC\conso, \myvec{N}_2}{\myvec{x}_1,  x, \myvec{x}_2}
			\\
			\CwH[\eta_1]							&	
			\cE\conso								&
			\chole		 							&
			\lambda x . \cE\conso					&
			M \cE\conso							\mid&
			\rclr{\cE'\conso} M  					&
			\lambdabox{\cE\conso}{\myvec{M}}{\myvec{x}} &
			\lambdabox{M}{ \myvec{N}_1 , \cE , \myvec{N}_2}{\myvec{x}_1,  x,  \myvec{x}_2}
			&
			\\
			\CwH[\eta_2]							&	
			\cD\conso								&	
			\chole		 							&
			\lambda x . \cD\conso					&
			M \cD\conso							\mid&
			\cD\conso M 							&
			\lambdabox{\cD\conso}{\myvec{M}}{\myvec{x}}	&
			\lambdabox{M}{ \myvec{N}_1 , \rclr{\cD'\conso} , \myvec{N}_2}{\myvec{x}_1,  x,  \myvec{x}_2} 								&
		\end{array}
		\\
		\mbox{with $\rclr{\cE'\conso}\neq \conso \neq \rclr{\cD'\conso}$}
		\end{array}
	$}
\end{equation*}
%
We denote by $\cC\cons M$ the term obtained by replacing the hole $\chole$ in $\cC\conso$ with the term $M$.
By means of example, 
if $\cC\conso=\chole$ then $\cC\cons M=M$ and if $\cE\conso= \lambdabox{(\lambda x.x N)}{\chole}{x}$ then $\cE\cons M=\lambdabox{(\lambda x.xN)}{M}{x}$.
The reduction relations of our calculus are provided in \Cref{fig:reductionContexts}, where the ground steps and the rules for extending them to specific contexts are provided.

\begin{figure}[t]
	\textbf{Ground Steps:}\\
	\adjustbox{max width=\textwidth}{$\begin{array}{r@{\;}r@{\;}ll}
			(\llambda x M)N 
			&
			\bred[1]
			& 
			M\subst Nx
			\\
			\lambdabox{M}{\myvec{P},\rclr{ \lambdabox{R}{\myvec{N}}{\myvec{z}}}, \myvec{Q}}{\myvec{x},\rclr{y},\myvec{w}} 
			&
			\bred[2]
			&
			\lambdabox{M\subst{R}{y}}{\myvec{P},\myvec{N},\myvec{Q}}{\myvec{x},\myvec{z},\myvec{w}}
			\\
			%
			M
			&
			\ered[1]
			&
			\llambda x M x  
			\qquad
			\mbox{if } \Gamma \vdash M :  A\imp B, \; x\notin \FV M  \mbox{ and } M \notin \amltset
			\\
			M
			&  
			\ered[2]
			&
			\lambdabox{x}{M}{x}
			\qquad
			\mbox{if } \Gamma \vdash	M : \lbox A, \; x\notin \FV M   \mbox{ and } M \notin \smltset
			\\
			\lambdabox{ M	}{\myvec P,\rclr{N}, \myvec Q	}{{\myvec{x}},\rclr{y},\myvec{z}}  
			&
			\kred[1]
			&
			\lambdabox{M}{\myvec{P},\myvec{Q}}{\myvec{x},\myvec{z}} 
			\qquad
			\mbox{if }\sizeinof{M}{y}=0
			\\
			\lambdabox{M}{\myvec P,\rclr{N,N},\myvec Q}{{\myvec x},\rclr{y_1,y_2},{\myvec{z}}}    
			&
			\kred[2]
			&
			\lambdabox{M\subst{v,v}{y_1,y_2}}{\myvec P,\rclr{N},\myvec Q 	}{	{\myvec x},\rclr{v},\myvec{z}}
			\qquad
			\mbox{with }v \mbox{ fresh}
		\end{array}$}	
	\\
	\\
	\textbf{Reduction Steps in Contexts:}
	\\
	\adjustbox{max width=\textwidth}{$
		\begin{array}{ccccc}
			\vlinf{}{i\in\set{1,2}}{ \cC \cons{M} \bred \cC \cons{N}  }{M \bred[i] N}   
			\quad
			\vlinf{}{i\in\set{1,2}}{ \cC \cons{M} \kred \cC \cons{N}  }{M \kred[i] N}
			&&
			\vlinf{}{}{ \cE \cons{M} \ered \cE \cons{N}  }{M \eredo N  }  
			&&
			\vlinf{}{}{ \cD \cons{M} \ered \cD \cons{N}  }{M \eredt N  }  
			\\ 
			\mbox{with} 
			\qquad
			\cC\conso\in\CwH
			\qquad\quad
			&\quand&
			\cE\conso\in\CwH[{\eta_1}]
			&\quand&
			\cD\conso\in\CwH[{\eta_2}]
		\end{array}
	$}
	
	\caption{
		Definition of the ground steps of the reduction relations,
		and the rules for their extension to terms with holes.
	}
	\label{fig:reductionContexts}
\end{figure}

\begin{remark}
	The term constructor $\letcon$ (i.e., $\boxpar{\cdot/\cdot}$ from \Cref{eq:let}) 
	plays no role in the standard $\eta$ and $\beta$ reduction rules from the literature, where it behaves as a black-box during reduction.
	The inertness of this constructor with respect to normalization is indeed what makes the lambda calculus in \cite{bel:deP:rit:extended,kakutanilambda} 
	unable to identify terms whose expected behavior is the same
	as, for example, the following pairs of terms:
	\begin{equation}\label{eq:WC}
		\begin{array}{c|c}
			\lambdabox{x}{v}{x}
			\quand
			\lambdabox{x}{v,w}{x,y}
			\qquad&\qquad
			\lambdabox{xyz}{v,v}{y,z}
			\quand
			\lambdabox{xyy}{v}{y}
		\end{array}
	\end{equation}

%
%
%
%
	Our operational semantics extends the one provided in \cite{kakutanilambda}.
	The novelty of our approach is 
	the definition of the $\kappa$-reduction 
	and 
	the restriction of the $\eta$-reduction. 
	The former is needed  to being able to identify \mlts with the same expected computational meaning, as the ones in \Cref{eq:WC}.
	The latter is carefully defined to avoid $\eta$-redexes that would make the reduction non-terminating, using a well-known technique in term rewriting theory (see, e.g., \cite{mints:closed,jay:eta}). 

	The need of these restrictions can be observed in the two following (unrestricted) $\eta$-reduction chains,
	which are both forbidden by our restricted rule from \Cref{fig:reductionContexts}.

\noindent\adjustbox{max width=\textwidth}{$\begin{array}{c}\\[-5pt]
	\begin{array}{cc}
		M \ered
		\llambda x M x
		\ered
		\llambda x (	\llambda y M y ) x  
		\ered 
		\dots  
		\\
		\mbox{whenever $\Gamma \vdash M : A \to B $}
	\end{array}
	\quand
	\begin{array}{c}
		M \ered
		\lambdabox{x}{M}{x}
		\ered
		\lambdabox{x}{\lambdabox{y}{M}{y}
		}{x} 
		\ered
		\dots 
		\\
		\mbox{whenever $\Gamma \vdash M : \lbox A$}
	\end{array}
\\[10pt]\end{array}$}

	Moreover, 
 	our definition rules out interactions between the $\eta$ and $\beta$ reductions
 	which could lead to infinite chains, as the ones shown below.
	$$
	\begin{array}{ccccc@{\alphaeq}c@{\qquad}l}
		\llambda x M 							&
		\ered 									&
		\llambda y (\llambda x M ) y 			&	 
		\bred									&
		\llambda y ( M \subst x y )				&
		\llambda x M 
		&
		\mbox{or}
		\\
		\lambdabox{x}{M}{x} 					&
		\ered 									&
		\lambdabox{x}{\lambdabox{y}{M}{y}}{x}	& 
		\bred 									&
		\lambdabox{y}{M  }{y} 					&
		\lambdabox{x}{M}{x} 
		&
		.
	\end{array}
	$$
\end{remark}

\begin{definition}
	We define the following reduction relations:
%
	\begin{equation}\label{eq:bekred}
		\bered=\bred\cup\ered
		\qquad
		\bkred=\ered\cup\kred
		\qquad
		\bekred=\bred\cup\ered\cup\kred
	\end{equation}
%
	For any $\xi \in\set{\beta,\eta, \kappa, \beta\eta,\beta\kappa, \beta\eta\kappa}$, 
	we denote 
	by $ \transc{\xi}$  its \defn{transitive closure}, 
	by $ \reflc{\xi} $  its \defn{reflexive closure},
	by $ \rtransc{\xi}$ its \defn{reflexive and transitive closure}, and 
	by $\equiv_\xi$ 	the \defn{equivalence relation} it enforces over terms, 
	that is, 
	its reflexive, symmetric and transitive closure.
	Given a term $ M $, 
	we denote by $ \nf{M}{\xi} $ 
	the set of its 
	\defn{$\rightsquigarrow_{\xi}$-normal form}. 
	A term $M$ is \defn{strongly normalizable} for $\rightsquigarrow_{\xi} $ if it admits no infinite $\rightsquigarrow_{\xi}$-chains 
	A reduction $\rightsquigarrow_{\xi}$ is \defn{strongly normalizing} if every term $ M$ is strongly normalizable for it. 
	A reduction $ \rightsquigarrow_{\xi} $ is \defn{confluent} if given $ M \rtransc{\xi} N_1  $ and $  M \rtransc{\xi} N_2 $ there exists a term $ N $ such that $ N_1 \rtransc{\xi} N $ and $ N_2 \rtransc{\xi} N$.
\end{definition}

The \emph{substitution} lemma and \emph{subject reduction} theorem holds for the reduction $\bekred$.
\begin{restatable}{lemma}{lemSubst}[Substitution Lemma]\label{lemma:substitution}
	Let 
	$\Gamma, x: B \vdash M: C$ and $\Gamma \vdash N : B $
	be derivable type assignments.
	Then $\Gamma ,x:B\vdash M \subst{N}{ x}: C$ is a derivable \as.
\end{restatable}
\begin{proof}
	The proof is by induction on $\sizeof{M}$. 
	\begin{itemize}
		\item 
		If $\sizeof M=1$ is a variable $z$ we either have that $M=x$ or $M=y$ for some $y\neq x$ that appears in $\Gamma$.
		Then 
		either $M\subst N x= N$ and $C=B$, 
		or $M\subst N x =M$. 
		In both cases $\Gamma \vdash M \subst{N}{ x}:C$ is derivable by hypothesis.

		\item

		If $\sizeof M \geq 1$ and 
		$M$ is an abstraction or an application, 
		then the proof is the same as in standard $\lambda$-calculus~\cite{terese}.
		If
		$M=\lambdabox{P}{T_1,\ldots,T_n}{x_1,\ldots x_n}$
		then $C=\lbox C'$
		and
		$M\subst N x =\lambdabox{P}{T_1 \subst {N}{x},\ldots T_n \subst{N}{x}}{x_1,\ldots x_n}$.
		Then, by definition of derivability, 
		we have
		$x\notin\set{x_1,\ldots, x_n}$ 
		and the
		type assignments
		$x_1 :  A_1,\ldots, x_n  :  A_n \vdash M :  C$
		and
		$\Gamma, x:B\vdash N_i  :  \lbox  A_i$
		are derivable
		for all $i\in\intset1n$ and for some $A_1,\ldots, A_n$.
		We can apply inductive hypothesis on
		$\Gamma, x:B\vdash N_i  :  \lbox  A_i$
		to deduce that the \as
		$\Gamma \vdash T_i \subst{N}{x} : \lbox A_i$ is derivable 
		for all $i\in\intset1n$.
		We conclude the existence the desired \as with bottom-mots rule a $\dnbox$
		with premises
		$x_1 :  A_1,\ldots, x_n  :  A_n \vdash \subst Mx :  C$
		and
		$\Gamma \vdash T_i \subst{N}{x} : \lbox A_i$ for all $i\in\intset1n$.
	\end{itemize}
\end{proof}

\begin{restatable}{theorem}{thmSubRed}\label{thm:subjr}
	Let $\Gamma \vdash M: C$ be derivable.
	If $M\bekred N$, then $\Gamma \vdash N: C$.  
\end{restatable}
\begin{proof}
	Because of \Cref{lemma:substitution}, it suffices to check the cases when $M$ reduces to $N$ in one ground step of $\bekred$:
	\begin{itemize}
		\item 
		if $M\bred[1] N$, then $ M = (\lambda x . P)Q$ and $ N = P \subst Q x $. 
		The case where $ M \bred[2] N $ uses a similar argument.
		The result follows the fact that
		if $\Gamma, x: B \vdash M: C$ and $\Gamma \vdash N : B $
		are derivable \as, then $\Gamma ,x:B\vdash M \subst{N}{ x}:C$ by \Cref{lemma:substitution}.
		
		\item 
		if $M\ered[1] N$, then $C=A\imp B$ and $N=\lambda x.Mx$.
		The result follows by applying the rule $\impintro$. 
		The case where $ M \ered[2] N $ uses a similar argument;
		
		\item 
		if $M\kred[1] N_1$, then
		$ M = \lambdabox{M'}{P_1,\ldots, P_k, \rclr{N}, P_{k+1},\ldots ,P_n}{x_1,\ldots, x_k, \rclr x , x_{k+1},\ldots, x_n} $ such that $ x$ is not free in $ M $,
		$C=\lbox B$,
		and
		$ N_1 = \lambdabox{M'}{\myvec{P}, \myvec{Q}}{\myvec{x},\myvec{y}} $.
		Then
		there are
		derivations for
		$\Gamma\vdash P_i:A_i$ for all $i\in\intset1n$ (for some $A_i$)
		and
		a derivation for
		$ x_1:A_1,\ldots,x_k: A_k, x : A, x_{k+1}:A_{k+1}\ldots , x_n:A_n \vdash M' : B$.
		Therefore
		we have a derivation for
		$ x_1: A_1,\ldots, x_n: A_n \vdash M' : B$
		since
		weakening is admissible (that is, 
		whenever $\Gamma,x:A \vdash M : C$ is derivable and $x$ does not occur free in $M$, then $\Gamma \vdash M : C$ is also derivable%
		\footnote{
			The admissibility of weakening is easily proven by induction on the size of a derivation.
		}%
		.
		Then we have a derivation of
		$\Gamma \vdash N: C$
		with bottom-most rule a $\dnbox$
		with right-most premise
		$ x_1 : A_1,\ldots, x_n: A_n\vdash M' : B$.
		and
		a premise
		$\Gamma\vdash P_i:A_i$ for each $i\in\intset1n$;

		\item 
		if $M\kred[2] N_1$, then
		we conclude similarly to the previous point
		since we have
		$$
		\hskip-1em
		M=\lambdabox{M'}{\myvec P,\rclr{N,N}, \myvec Q}{\myvec x,\rclr{y_1}, \rclr{y_2},\myvec{z}}
		\quand
		N_1=\lambdabox{ M \subst{y, y}{y_1, y_2}}{\myvec P,\rclr{N},\myvec Q }{	\myvec x,\rclr{y},\myvec{z}}
		\;.
		$$
		
	\end{itemize}
\end{proof}

We can prove local confluence of $\bekred$ by case analysis of the critical pairs using the following lemma.

\begin{restatable}{lemma}{lemRedsub}\label{lem:redsub} 
	Let $P,P'$ and $Q$ \mlts.
	If $ P \bekred P'$, then
	$P\subst {Q} x \bekred^\ast P'\subst {Q} x $.
	Moreover, 
	there is a $N_Q$ such that 
	$  Q \subst P x \bekred^\ast N_Q$
	and 
	$  Q\subst {P'} x \bekred^\ast N_Q$.
\end{restatable}
\begin{proof}
	
	To prove that $ P \subst Q x \bekred^\ast P' \subst {Q} x$ 
	we first prove that the result holds for the ground reduction steps:
	
	\begin{itemize}
		
		\item 
		If $ P = (\lambda y . M) N  \bred[1]  M \subst N y =P'$,
		then we have $ P \subst Q x = ((\lambda y . M) N )\subst Q x  =  (\lambda y . M \subst Q x) (N \subst Q x)$. 
		We conclude by associativity of substitution. 
		
		\item 
		If $ P \bred[1]  P'$, we conclude similarly to the previous case.

		\item 
		If $P \ered[1] \lambda y. P y=P'$, then 
		$(\lambda y. P y ) \subst {Q} x =\lambda y. (P \subst {Q} x) y  $.
		We conclude since,  definition of $\ered $, we have that $ P \subst Q x \subst q x  \ered \lambda y. (P \subst {Q} x) y  $. 
		
		\item 
		If $P \ered[2] \lambdabox y P y=P'$,
		then 
		$ P' \subst {Q} x =  \lambdabox y {(P \subst {Q} x)} y$.
		We conclude since, by definition of $\ered $, we have that $ P  \subst Q x \ered \lambdabox y {(P \subst {Q} x)} y $. 
		
		\item 
		If $ P = \lambdabox{ M'}{\myvec{P}, N, \myvec{Q}}{\myvec{x}, y, \myvec{z}} \kred[1] P' =  \lambdabox{ M'}{\myvec{P}, \myvec{Q}}{\myvec{x}, \myvec{z}}$,
		then 
		$  P \subst Q x = \lambdabox{ M' \subst Q x }{\myvec{P} \subst Q x , N \subst Q x, \myvec{Q} \subst Q x }{\myvec{x}, y, \myvec{z}}$. 
		We conclude since $ P' \subst Q x = \lambdabox{ M' \subst Q x }{\myvec{P} \subst Q x , \myvec{Q} \subst Q x }{\myvec{x}, \myvec{z}} $, thus $ P \subst Q x \kred[1] P' \subst Q x$.
				
		\item 
		If $ P =  \lambdabox{ M'}{\myvec{P}, N, N, \myvec{Q}}{\myvec{x}, y_1, y_2, \myvec{z}} \kred[2] P' =  \lambdabox{ M' \subst{v,v}{y_1, y_2}}{\myvec{P}, N, \myvec{Q}}{\myvec{x}, v, \myvec{z}}$, 
		then 
		$ P \subst Q x =  \lambdabox{ M' \subst Q x}{\myvec{P} \subst Q x, N \subst Q x, N \subst Q x, \myvec{Q} \subst Q x}{\myvec{x}, y_1, y_2, \myvec{z}}$ and $ P' \subst Q x =  \lambdabox{ {(M' \subst{v,v}{y_1, y_2})}\subst Q x}{\myvec{P} \subst Q x, N \subst Q x,  \myvec{Q} \subst Q x}{\myvec{x}, v, \myvec{z}}$.
		We conclude since 
		$ {(M' \subst{v,v}{y_1, y_2})}\subst Q x = {(M'\subst Q x )}\subst{v,v}{y_1, y_2} $, 
		thus  $ P \subst Q x \kred[2] P' \subst Q x$.
	\end{itemize}
	
	Then we conclude by showing that it also holds when reductions are applied in a context.
	
	\begin{itemize}
		\item 
		If $ P = \lambda y . M $ for a $M$ such that $ M \bekred M' $,
		then $P'=\lambda y.M'$  and we conclude by inductive hypothesis since $P\subst {Q} x=\lambda y. M\subst {Q} x \bekred \lambda y. M' \subst {Q} x$.
		
		\item 
		If $ P = MN  $, then $ P \subst Q x  = M \subst Q x N \subst Q x  $.
		In this case, 
		either $ M \bekred M'$ and $ P' = M'N $, 
		or $ N \bekred N' $  and $ P' = MN'$.
		We conclude 
		taking into account the restriction of the possible application of the reduction steps in a context.
		Note that without the restriction on $\ered[1]$ we could have had $ M \subst Q x \ered[1] \lambda y. (M \subst Q x) y $,
		and therefore $M \subst Q x N \ered[1] \lambda y. (M \subst Q x) y N \bred[1] M \subst Q x N$;
		
		\item 
		If $ P =  \lambdabox M N y $
		then $  P \subst Q x  =  \lambdabox {M \subst Q x} {N \subst Q x} y$. 
		In this case,  
		either $ M \bekred M' $ and $ P' =  \lambdabox {M'} N  y $,
		or $ N \bekred N' $ and $P' = \lambdabox {M} {N'}{y} $.
		We conclude 
		taking into account the restriction of the possible application of the reduction steps in a context.
		Note that without the restriction on  $\ered[2]$ we could have had  
		$ N \subst Q x \ered[2] \lambdabox y {N \subst Q x} y $,
		and therefore the following sequence of reductions.
		$$
		\lambdabox M {N \subst Q x} z \ered[2] \lambdabox M {\lambdabox y {N \subst Q x} y } z \bred[2] \lambdabox M {N \subst Q x} z 
		$$
		
	\end{itemize}
	
	The fact that,
	for each $Q$,
	there is a $N_Q$ such that 
	$  Q \subst P x \bekred^\ast N_Q$
	and 
	$  Q\subst {P'} x \bekred^\ast N_Q$
	is proven
	by induction on the structure of $Q$
	and 
	considering the restrictions on the definition of the rewriting steps in a context.

\end{proof}

\begin{proposition}\label{prop:locConf}
	The reduction $\bekred$ is locally confluent.
\end{proposition}
\begin{proof}
	We show that 
	if there are $M$, $N_1$ and $N_2$ with $N_1\neq N_2$
	such that 
	$ M \bekred N_1 $ and $ M \bekred N_2$, 
	then there exists $N $ such that $N_1 \rtransc{\beta \eta \kappa} N $ and $ \rtransc{\beta \eta \kappa} N$.
	Without loss of generality we have the following cases:
	
	\begin{enumerate}
		\item 
		if $M\bred[1]N_1$
		with
		$ M = (\lambda x . P) Q $ 
		and 
		$ N_1 =  P \subst  Q x$,
		then
		$N_2$ can only be obtained by applying $\bekred$ the subterms $P$ and $Q$ of $M$.
		We conclude by \Cref{lem:redsub};
		
		\item \label{localc:b2}
		if $M\bred[2]N_1$ with
		$ M = \lambdabox{M'}{\myvec{P}, \lambdabox{R}{\myvec{N}}{\myvec{z}}, \myvec{Q}}{\myvec{x}, y, \myvec{w}} $ 
		and with \\
		$ N_1 =  \lambdabox{M' \subst R y}{\myvec{P}, \myvec{N}, \myvec{Q}}{\myvec{x}, \myvec{z}, \myvec{w}}$,
		then $N_2$ must be a term obtained by applying $\bekred$ on $R$ or on one of the terms in 
		$ \myvec{ P}$, $\myvec{ N}$ or $\myvec{Q}$.
		We conclude again by \Cref{lem:redsub};

		\item
		if $M\ered[1] N_1$,
		then 
		$ \Gamma \vdash M  : A \imp B $
		and 
		$ N_1 = \lambda x . M x$.
		Therefore, for any $N_2$ such that 
		$M \bekred N_2$
		we have
		that 
		$\Gamma \vdash N_2: A\imp B$
		(by subject reduction).
		Then
		\begin{itemize}
			\item 
			either
			$N_2$ is not an abstraction and
			we conclude by letting 
			$N= \lambda x . N_2 x$.
			\item 
			otherwise
			$N_2=\lambda y.M'$
			and we conclude 
			since 
			$N_1\ered[1] \lambda x. N_2 x\bred[1] N_2$.
		\end{itemize}
		
		\item
		if $M\ered[2] N_1$ 
		with 
		$ \Gamma \vdash M  : \lbox A $ 
		and 
		$ N_1 = \lambdabox x M x$,
		then we conclude with a similar argument with respect to the previous point by letting 
		$N= \lambdabox x {N_2 }x $.

		\item
		if $M\kred N_1$,
		$
		M = \lambdabox{M'}{\myvec{P}, \rclr{N}, \myvec{Q}}{\myvec{x}, \rclr{x} , \myvec{y}} \kred[1]
		\lambdabox{M'}{\myvec{P}, \myvec{Q}}{\myvec{x},\myvec{y}} =N_1 $,
		or
		$M=\lambdabox{M'}{\myvec P,\rclr{N,N}, \myvec Q}{\myvec x,\rclr{y_1}, \rclr{y_2},\myvec{z}}\kred[2]\lambdabox{ M \subst{y, y}{y_1, y_2} }{\myvec P,\rclr{N},\myvec Q 	}{	\myvec x,\rclr{y},\myvec{z}}=N_1$.
		In both cases we conclude with an argument similar to the one in Case~(\ref{localc:b2}).
		
	\end{enumerate}
\end{proof}

In order to prove the termination of $\bekred$, we define the following measures.

\def\estSet{\mathsf{Est}}
\def\nonLinofin#1#2{|#2|_#1^{\mathsf{non-lin}}}
\begin{definition}\label{def:measEta}
	Let $M$ be a \mlt. 
	We define the following multisets of derivable type assignments:
	$$
	\adjustbox{max width=\textwidth}{$
		\begin{array}{l}
			\estSet_1 (M) 
			= 
			\Set{
				B\imp C
				\mid 
				P \in \subterm M\setminus\amltset
				\text{ such that } 
				M \neq PQ 
				\mbox{ and }
				\Gamma \vdash P:	B\imp C
			} 
			\\
			\estSet_2 (M) 
			= 
			\Set{ 
				\lbox B
				\mid 
				P \in \subterm M\setminus\smltset
				\text{ such that } 
				M \neq \lambdabox{Q }{ \myvec{N}_1, P, \myvec{N}_2  }{\myvec{x}_1, x, \myvec{x}_2}
				\mbox{ and }
				\Gamma \vdash P: \lbox B
			}
		\end{array}
		$}
	$$
	We then define $\etam{M}\coloneqq\etao{M}+\etat{M}$
	with 
	$$
	\etao{M} \coloneqq \sum_{ A \in \estSet_1 (M) } \sizeo{ A } 
	\quand
	\etat{M} \coloneqq \sum_{A  \in \estSet_2(M)  } \sizet{  A }
	$$
	$$
	\mbox{where}\qquad
	\begin{array}{l@{\quad }l@{\quad }l}
		\sizeo{a} = 0 
		&
		\sizeo{ A \imp B } = \sizeo{A} + \sizeo{B} + 1              
		&
		\sizeo{\lbox A} = \sizeo{ A}
		\\
		\sizet{a} = 0 
		&
		\sizet{ A \imp B } = \sizet{A} + \sizet{B}   
		&
		\sizet{\lbox A} = \sizet{ A} + 1 
	\end{array}
	$$ 
	We also define $\kappam M$ as the size of substitution subterms of $M$ as follows:
	$$
	\begin{array}{c}
		\kappam{x} = 0 
		\qquad 
		\kappam{ \lambda x M } = \kappam{M} 
		\qquad 
		\kappam{MN} = \kappam{M} + \kappam{N}
		\\
		\kappam{\lambdabox{M}{N_1, \dots, N_n}{x_1,\dots, x_n}}= \kappam{M}+\kappam {N}+ n
	\end{array}
	$$
\end{definition}
\begin{example}
	Intuitively, the measure
	$\etam\cdot$
	does not take into account all the subterms of $M$, 
	but only the ones on which we can apply the restricted 
	$\ered$.
	For an example, consider the \mlt $M  =  (\lambda z^{ a \to a }. z) y$
	with
	$\etam M=3$ because all four subterms of $M$ are of type a $\imp$-formula, but the subterm $\lambda z.z$ is an abstraction, therefore no $\ered$ can be applied on it.
	If $M\ered N$, because of the restrictions on $\ered$, 
	we have that
	\begin{itemize}
		\item 
		either
		$N = (\lambda z. z) (\lambda v . yv )$
		with
		$\etam N=2$
		because
		no $\ered$ can be applied to the subterms $y$ 
		and 
		$\lambda z.z$ (they occur on the left of an application)
		or
		$\lambda v.yv$ (it is an abstraction),
		but only to the subterms
		$z$ and the whole term $N$;

		\item 
		or
		$N= \lambda v^a . ((\lambda z. z)y) v $
		with
		$\etam N=2$
		because
		$\ered$ can only be applied to the subterms
		$z$ and $y$.
		
	\end{itemize}
\end{example}

\begin{restatable}{lemma}{lemEtaDecrease}\label{lem:etadec}
	Let $M$ and $N$ be \mlts. 
	If $ M \ered N $, 
	either $\etam N<\etam M$ or there is $ N' $ such that $ N \ered N' $ and $\etam{N'}<\etam{M}$.
\end{restatable}
\begin{proof}
	We only discuss the two following cases, since the others are direct consequence of the definitions of $\sizeo\cdot$ and $\sizet\cdot$.
	\begin{enumerate}
		\item 
		If $ \Gamma \vdash M : C  $  with $C=A \imp B$ and $ M \eredo N$, 
		then $ N = \lambda x . M x$.
		Therefore, $\sizeo C=\sizeo {A\imp B}>0$ and 
		$$
		\etao{M}
		=
		\left( \sum_{ C' \in \estSet_1 (M) } \sizeo{C'} \right)
		>
		\sum_{
			C' \in \estSet_1 (M) 
			\setminus 
			\set{ C}
		} 
		\sizeo{C'}
		>
		\etao{\lambda x . M x } 
		=  
		\etao N
		$$ 
		
		Now consider $\etat{N}$. 
		We reason by cases on the structure of the type $ A$. 
		If $ A $ is not a box, we can conclude by setting $ N' = N$.  
		If $ A $ is of the shape $ \lbox A'$ then $\etat{N } = \sum_{C' \in \estSet_2 (M) \cup\set{A}} \sizet{C'}$.
		This means that we can perform a step $ \lambda x . M x \eredt \lambda x. (M (\lambdabox{z}{x}{z}))   $ with $z $ fresh. 
		Now we set $ N' =  \lambda x. (M (\lambdabox{z}{x}{z}))$ since 
		$\etat{N'} =  \sum_{C' \in \estSet_2 (M) } \sizet{C'}  = \etat{M}$. 
		We remark that $\etao{N'} = \etao{N}$. 
		Hence $ \etam{N'} = \etao{N} + \etat{M} <  \etao{M} + \etat{M} = \etam{M}   $ since $ \etao{N} < \etao{M} $ for what we said before.

		\item 
		If $ \Gamma \vdash M : C$  with $C=\lbox A$ and $ M \eredt N$, 
		then $  N = \lambdabox{x}{M}{x}$.
		Therefore, $\sizet C=\sizet{\lbox A}>0$ and 
		$$
		\etat{M}
		=
		\left( \sum_{ C' \in \estSet_1 (M) } \sizet{C'} \right)
		>
		\sum_{
			C' \in \estSet_1 (M) 
			\setminus 
			\set{ C}
		} 
		\sizeo{C'}
		>
		\etat{\lambda x . M x } 
		=  
		\etat{ \lambdabox{x}{M}{x} } 
		= 
		\etat N
		$$
		
		Now consider $ \etao{N}$.  
		We reason by cases on the structure of the type $ A$. 
		If $ A $ is not an implication, we can conclude by setting $ N' = N$.  
		If $ A $ is of the shape $ A' \imp B$ then $\etao{N } = \sum_{C' \in \estSet_1 (M) \cup \set{A}} 
		\sizet{C'}$. 
		This means that we can  perform a step  $ \lambdabox{x}{M}{x} \eredt   \lambdabox{(\lambda y. x y)}{M}{x}  $ with $y $ fresh. 
		Now we set $ N' = \lambdabox{(\lambda y. x y)}{M}{x}  $ since $  \etao{N'} =  \sum_{
			C' \in \estSet_1 (M) } \sizet{C'}  = \etao{M}$. 
		We remark that $\etat{N'} = \etat{N}$.
		Hence $ \etam{N'} = \etat{N} + \etao{M}<\etat{M} + \etao{M} = \etam{M}$ since $ \etat{N} < \etat{M}$ for what we said before. 
		
	\end{enumerate}
\end{proof}

\begin{lemma}\label{lem:subcommeta}
	Let $ P, Q, N  $ be \mlts.
	If $ P \subst Q x \ered N $, then there are $ N_1$ and $N_2$ such that $  P \ered^\ast N_1$ and $Q \ered^\ast N_2 $ with  $ N_1 \subst  {N_2} {x} = N $.
\end{lemma}
\begin{proof}
	We prove it by induction on the structure of $ P$.
	\begin{itemize} 
		\item If $ P = x $ then $P \subst{Q}{x}=Q \ered N$.
		Then $  N_1 = x $ and $ N_2 = N$.
		
		\item 
		If $ P = \lambda y. P' $, then $  P \subst Q x = \lambda y . P' \subst Q x$.
		By definition, if $   P \subst Q x \ered N $ then 
		$ \Gamma \vdash P : A\imp B  $ and then 
		$N = \lambda z . P''     $ with $ P' \subst Q x \ered P''$.
		In this case we apply inductive hypothesis to get $ N'_1  $ and $ N'_2 $ such that $P' \ered N'_1 $ and $ Q \ered N'_2  $ with $ {N'_1} \subst {N'_2} {x} = P'' $.
		We conclude by letting   $ N_1 = \lambda y. N'_1 $ and $  N_2 = N'_2 $;

		\item 
		If $ P = ST$, then $   P \subst Q x = S \subst {Q} {x} T \subst {Q} {x}$.
		By definition of $\ered$, if $ P \subst Q x \ered N $ then:
		\begin{itemize}
			\item 
			if $ S \subst {Q} {x} T \subst {Q} {x} \ered[1] \lambda y. ( S \subst {Q} {x} T \subst {Q} {x}) y=N$, 
			then $ N_1 = \lambda y (ST) y$ and $N_2 = Q $;
			
			\item 
			if $  S \subst {Q} {x} T \subst {Q} {x} \ered[2] \lambdabox{ y}  { S \subst {Q} {x} T \subst {Q} {x}} {y}=N$, 
			then $ N_1 =  \lambdabox{y}{ST}{y}$ and $N_2 = Q $;
			
			\item
			otherwise, the step $ N \ered N' $ must be applied in a context. 
			In this case, we have that $ N = P_1 P_2 $ and either $ S \subst Q x \ered P_1 $ or $ T \subst Q x \ered P_2 $. 
			In the case $  S \subst Q x \ered P_1$, by definition of the contextual step, $\subst Q x \ered P_1 $ cannot be a step of $\ered[1]$. 
			By inductive hypothesis there are $ N_{1,2}  $ and $ N_{2,2} $ such that $S \ered N_{1,2} $ and $ Q \ered N_{2,2} $ with $P_1 = {N_{1,1}} \subst {N_{2,2}}{x}$. 
			Then
			$ N_1 = N_{1,1} (T \subst Q x) $ and $ N_2 = N_{2,2}$.
			The other case is similar.
			
		\end{itemize} 
		\item 
		If $ P = \lambdabox{S}{\myvec T}{\myvec y} $ 
		then $   P \subst Q x = \lambdabox{S \subst {Q} {x}}{ {\myvec {T}} \subst {Q} {x}}{\myvec{y}}$.
		By definition of $\ered$, if $ P \subst Q x \ered N $ then:
		\begin{itemize}
			\item 
			if $ \lambdabox{S \subst {Q} {x}}{ {\myvec {T}} \subst {Q} {x}}{\myvec{y}} \ered[1] \lambda y. ( \lambdabox{S \subst {Q} {x}}{ {\myvec {T}} \subst {Q} {x}}{\myvec{y}}) y=N$, 
			then 
			$ N_1 = \lambda y . (\lambdabox{S}{\myvec{T}}{\myvec{y}}) y$ and $N_2 = Q $;

			\item
			otherwise, the step $ N \ered N' $ must be applied in a context. 
			In this case, we have that $ N = \lambdabox{P_1 }{\myvec{P}_2}{\myvec{y}} $ and either $ S \subst Q x \ered P_1 $ or $ \myvec{T} \subst Q x \ered \myvec{P}_2 $. 
			We do the case $  S \subst Q x \ered P_1$.  
			By inductive hypothesis there exists $ N_{1,2}  $ and $ N_{2,2} $ such that 
			$  S \ered N_{1,2} $ and $ Q \ered N_{2,2} $ 
			with $P_1 = {N_{1,1}} \subst {N_{2,2}}{x}$. 
			Then 
			$ N_1 = \lambdabox{N_{1,1} }{\myvec{T} \subst Q x }{\myvec{y}}$ and $ N_2 = N_{2,2}$.
			The other case is similar.	
		\end{itemize}
	\end{itemize}
\end{proof}

\def\diag#1#2#3#4#5{
	if 
	$M\rightsquigarrow_{#1} N \rightsquigarrow_{#3} N'$,
	then there is $ M' $ such that
	$M\rightsquigarrow_{#2} M'$
	and
	$M'\rightsquigarrow_{#4}^{#5} N'$
	\xspace
}
\begin{restatable}{lemma}{lemNormpr}\label{lem:normpr}
	The following commutations between $\bred$, $\ered$ and $\kred$ hold:
	\begin{itemize}
		\item \diag \kappa\beta\beta\kappa*;
		\item \diag\eta\kappa\kappa\eta*;
		\item \diag\beta\eta\eta\beta*.
	\end{itemize}
\end{restatable}
\begin{proof}
	After \Cref{lemma:substitution,lem:subcommeta},
	we can reason on the structure of $M$ only considering its possible shape according to the ground steps of $\bekred$.
	\begin{itemize}
		
		\item 
		If $  M \ered[1] N = \lambda x. M x$, 
		then, by \cref{lem:subcommeta} we have $ N_1$ and $ N_2 $ such that $ P \ered N_1$ and $Q \ered N_2 $ with $  {N_1} \subst {N_2} x  =  N$.
		We conclude by letting $M' =  (\lambda x . N_1) N_2$.
		A similar argument is applied if $M\ered[2]N$.

		\item 
		If $ M = (\lambda x . P)Q  \bred[1] P \subst  Q x  =N$, then we have $  N_1$ and $ N_2  $ such that $ P \ered N_1$ and $Q \ered N_2 $ with $  {N_1} \subst {N_2} x  =  N$.
		We conclude by letting $  M' =  (\lambda x . N_1) N_2$.
		A similar argument is applied if $M \bred[2] N$.
		
		\item 
		If $ M = \lambdabox{M}{ \myvec{P}, N , \myvec{Q}  }{\myvec{x} , y, \myvec{y}  } \kred[1]  \lambdabox{M}{ \myvec{P}, \myvec{Q}  }{\myvec{x} ,  \myvec{y}  } =N $, 
		then we can conclude since any $\beta $-redex of $ N $ is also a $\beta$-redex of $  M$. 
		A similar argument is applied if $M\kred[2]N$.
		
	\end{itemize}
\end{proof}

Therefore, the following corollary trivially holds.
\begin{corollary}
	The following hold.
	\begin{itemize} 
		\item\label{prop:normpr:2} 
		if $ M \kred N  $,
		then $ \nf{M}{\beta} \kred^\ast \nf{ N}{\beta}$;
		
		\item\label{prop:normpr:3} 
		if $ M \ered N $,
		then $ \nf{ M}{\kappa} \ered \nf{N}{\kappa}$;
		
		\item\label{prop:normpr:4} 
		if $ M \bred N  $,
		then $\nf{ M}{\eta} \bred^\ast \nf{N}{\eta}$.
	\end{itemize}
\end{corollary}

\begin{restatable}{theorem}{thmSNadnC}\label{thm:SNandC}
	The reduction relation $\bekred$ is strongly normalizing and confluent.
\end{restatable}
\begin{proof}
	After \Cref{prop:locConf}, 
	it suffices to prove that $\bekred$ is strongly normalizing to conclude by Newman's lemma that $\bekred$ is also confluent.

	To prove strong normalization we use the fact that 
	the reductions
	$\bred$, $\ered$ and $\kred$
	are strongly normalizing:
	for $\bred$ the proof can be found in \cite{kakutanilambda},
	for $\ered$ the proof is by induction on $\etam\cdot$ using \Cref{lem:etadec}, and
	for $\kred$ it follows the fact that, by definition of $\kappam \cdot$, we have that $\kappam M>\kappam N$  whenever $ M \kred N $.
	To conclude that $\bekred$ also is strongly normalizing,
	the standard result (see, e.g., \cite{terese}) in rewriting theory ensuring that
	given
	two strongly normalizing reduction relations 
	$\rightsquigarrow_1$ and $\rightsquigarrow_2 $ 
	with $ \rightsquigarrow_1 $ confluent,
	if 
	$ M \rightsquigarrow_2 N $ implies the existence of a reduction $\mathsf{nf}_1(M) \transc{2} \mathsf{nf}_1 (N)$ for any $M $ and $N$, ,
	then 
	$\rightsquigarrow_1 \cup \rightsquigarrow_2$ is strongly normalizing.
	In our case,
	the fact that 
	$ M \rightsquigarrow_2 N $ implies $\mathsf{nf}_1(M) \transc{2} \mathsf{nf}_1 (N)$ 
	is a corollary of \Cref{lem:normpr}.
\end{proof}

\begin{definition}\label{def:normTerms}
	The set $\normTerms$ 
	is the set of \mlts defined inductively as follows:
	\begin{itemize}
		
		\item 
		if 
		$x$ is a variable, $T_1,\ldots, T_n\in\normTerms$, 
		and 
		there are derivations for the types assignments 
		$ \Gamma \vdash x : (A_1,\ldots, A_n) \to C$  with $ C$ atomic
		and
		$ \Gamma \vdash T_i: A_i$  for all $i\in\intset1n$,
		then $xT_1\cdots T_n\in \normTerms$.
		Variables are the special case with $n=0$;

		\item 
		if 
		$ T \in \normTerms$ 
		and
		there is a derivation of 
		$ \Gamma, x : A \vdash T:C $,
		then 
		$\lambda x^A . T\in \normTerms$;
		
		\item 
		if 
		$M\in \normTerms$, 
		$\FV M=\set{x_1, \ldots, x_n}$ and
		the type assignment
		$x_1 : B_1, \dots, x_n : B_n \vdash M : C $ is derivable,
		and
		if
		there are $n$ distinct terms
		$T_1,\ldots,T_n\in\mltset$ 
		of the shape 
		$T_i=y_iU_{i1}\cdots U_{ik_i}$ 
		with 
		$U_{ij}\in\normTerms$ for all $i\in\intset1n$ and $j\in\intset1{k_i}$,
		such that
		the type assignment $ \Gamma \vdash T_i:\lbox B_i$ is derivable for all $i\in\intset1n$,
		then
		$\lambdabox{M}{T_1,\ldots,T_n}{x_1,\ldots,x_n}\in \normTerms$.
		
	\end{itemize}
\end{definition}
%
%
%
%
\begin{restatable}{proposition}{propNormal}\label{prop:normal}
	The set $\normTerms$ is the set of \mlts in $\beta\eta\kappa$-normal form $\nfset$.
\end{restatable}
\begin{proof}
	By definition, every $\normTerms\subseteq\nfset $ is  $\bekred$-normal.
	To prove the converse we proceed by induction on the structure of $ M \in\nfset$:
	\begin{itemize}
		\item 
		if $ M  = x$,
		then $M\in\normTerms$ by definition;
		
		\item 
		if $ M = \lambda x . M'\in \nfset$, then also $ M' \in \nfset$.
		By inductive hypothesis, this implies $ M' \in \normTerms$.
		Therefore $ \lambda x . M' \in \normTerms$;
		
		\item 
		if $ M = PQ \in\nfset$, then both $ P$ and $ Q$ are in $\nfset$
		and
		there is a derivable type assignment $ \Gamma \vdash M : C$, 
		and derivable type assignments
		$ \Gamma\vdash P : A \imp C$ and $ \Gamma \vdash Q : A$.
		We have that no $\ered$-rule can be applied to $C$ because $M\in\nf\Lambda\eta$; thus $C$ must be atomic.
		We know that $P$ cannot be in $\amltset$ since $M\in\nf\Lambda\beta$ and $P$ cannot be in $\smltset$ because $ \Gamma\vdash P : A\imp  C$ is derivable.
		Then by inductive hypothesis we have that $ P = x T_1, \dots T_n $  for some $T_1,\ldots,T_n\in\normTerms$.
		We conclude that $ PQ \in \normTerms$;
		
		\item
		if $ M = \lambdabox{P}{Q_1, \dots, Q_n}{x_1, \dots, x_n} \in\nfset$,
		then 
		there is a derivable type assignment
		$ x_1 : B_1, \dots, x_n : B_n \vdash P : C $
		and derivable type assignments $\Gamma \vdash Q_i:\lbox B_i$ for all $i\in\intset1n$.
		Since $M\in\nfset$, then no $\bekred$-rule can be applied to $M$, nor to $P$; thus $P\in\nfset$.
		Similarly, since $M\in\nfset$,
		then 
		$Q_i\notin\smltset$ (otherwise we could apply $\bred^2$),
		$Q_i\in\nf{\Lambda}{\beta\kappa}$ (since no $\bkred$-rule can be applied to $ Q_i$)
		and
		$Q_i$ cannot be in $\nf\Lambda\eta$ (because $Q_i:\lbox B_i$ and otherwise $\ered$-steps could be applied on $M$)
		for all $i\in\intset1n$.
		We conclude that $M\in\normTerms$.
	\end{itemize}
\end{proof}

\section{A Canonical Type System for $\CK$}\label{sec:foc}

\def\aximp{\mathsf{\imp_L^{ax}}}
\def\kimp{\mathsf{\imp_L^K}}
\def\mlimp{\mathsf{\imp_R^*}}

\begin{figure}[t]
	\adjustbox{max width=\textwidth}{$\begin{array}{c}
			\vlinf{\taxrule}{}{\Gamma,x:c\vdash x:c}{}
			\qquad
			\vlinf{\exrule}{\ast}{\sigma(\Gamma)\vdash M:C}{\Gamma\vdash M:C}
			\qquad
			\vlinf{\kbrule}{\star}{
				\Delta, y_1 :   \lbox A_1,\ldots, y_n  :  \lbox A_n \vdash \lambdabox{M}{x_1,\ldots, x_n}{y_1,\ldots,y_n}  :  \lbox  C
			}{x_1 :  A_1,\ldots,x_n :  A_n \vdash M :   C}
			\\\\
			\vlinf{\aximp}{\S}{
				\Gamma, y :  \underbrace{(A_1,\ldots,  A_n)\imp c}_B\vdash yN_1\cdots N_n  :  c
			}{
				\left\{
				\Gamma,y :  B \vdash N_i :  A_i
				\right\}_{i\in\intset1n}
			}
			\qquad
			\vlinf{\mlimp}{}{
				\Gamma\vdash \lambda x_1^{A_1}\cdots x_n^{A_n}. M :  (A_1,\ldots A_n) \imp C 
			}{\Gamma ,x_1 :  A_1, \ldots,x_n :  A_n \vdash M :  C}
			\\\\
			\vliinf{\kimp}{\dagger, \S}{
				\Gamma, 
				\underbrace{
					f_1 :  (A_{1,1},\ldots, A_{1,k_1})\imp \lbox B_1
					,\ldots, 
					f_n  :  (A_{n,1},\ldots, A_{n,k_n}) \imp \lbox B_n
				}_{\Delta}
				\vdash 
				\lambdabox{M}{N_1,\ldots, N_n,\myvec{z}}{y_1,\ldots y_n,\myvec{w}}  :  \lbox C
			}{
				\left\{
				\Gamma,\Delta \vdash T_{i,j}  :   A_{i,j}
				\right\}_{
					i\in\intset1n,
					j\in\intset1{k_i}
				}
			}{	
				\Gamma, \Delta, x_1  :  \lbox  B_1, \ldots,  x_n  :  \lbox B_n \vdash \lambdabox{M}{x_1,\ldots,x_n, \myvec{z}}{y_1,\ldots , y_n,\myvec{w}}  :  \lbox C
			}
			\\
			\begin{array}{l@{\;\coloneqq\;}l@{\qquad\qquad}l@{\;\coloneqq\;}l}
				\ast
				&
				\sigma \mbox{ permutation over }\intset1n
				&
				\star 
				& 
				\mbox{ $FV(M)=\set{x_1,\ldots x_n}$ and $y_1,\ldots, y_n$ fresh}
				\\
				\S
				&
				\mbox{each $N_i = f_i T_{i,1} \cdots  T_{i,k_i}$ for $i\in \set{1,\ldots, n}$}
				&
				\dagger
				&
				\mbox{$\Gamma$ contains no formula of the shape $(A_1\cdots A_n )\imp \lbox B$}
			\end{array}
		\end{array}$}
	\caption{Typing rules of the typing system $\FCK$.}
	\label{fig:focus}
\end{figure}

In this section we present an alternative typing system for modal $\lambda$-terms
where each term in $\normTerms$ admits exactly one typing derivation. 
The rules of this system (we call $\FCK$) are provided in \Cref{fig:focus} and are conceived to reduce the non-determinism of the typing process,
following the same approach used in designing focused sequent calculi~\cite{andreoli:01,mil:vol:foc,cha:mar:str:fscd16}.
Derivations and derivability in $\FCK$ are defined analogously to \Cref{def:derivation}, using rules in $\FCK$ instead of rules in $\NDCK$. We remark that the structural rules of weakening and contraction are admissible in the system.

We can now prove a result of \emph{canonicity} of $\FCK$ with respect to typing derivations of \mlts in $\nfset$.

\begin{restatable}{theorem}{NDisFCK}\label{thm:DNisFCK}
	Let 
	$T\in \normTerms$
	and 
	$\Gamma \vdash T:A$
	be a derivable \as.
	Then there is a unique (up to $\exrule$-rules) derivation of $\Gamma \vdash T :  A$ in $\FCK$.
\end{restatable}
\begin{proof}
	The proof of this theorem follows from the 
	correspondence between the inductive definition of terms in $\normTerms$ (\cref{def:normTerms})
	and the shape of the typing rules of $\FCK$.
		
	By definition of $\normTerms$, 
	we have the following cases:
	\begin{itemize}
		\item 
		if $M=x$ is a variable, then $A=a$ must be an atomic formula. 
		Since the sequent $\Gamma \vdash x :  a$ is derivable in $\NDCK$, 
		then it can only be the conclusion of a $\Idrule$-rule if $x :  a \in \Gamma$.
		We conclude since the rule $\Idrule$ is also in $\FCK$ and such a derivation is unique;
		
		\item 
		if $T= x N_1 \cdots N_m$,
		then $x: C=(A_1,\ldots,  A_n) \to A$ is in $\Gamma$
		and 
		$N_1,\ldots, N_m\in \normTerms$ with size smaller than $\sizeof M$.
		By inductive hypothesis,
		for each $i\in\intset1m$ the sequent $\Gamma \vdash N_i : A_i$ 
		is derivable in $\NDCK$, therefore admits a unique derivation $\dD_i$ in $\FCK$.
		We conclude since we have a unique derivation of $\Gamma \vdash M :  A$ 
		starting with a by $\aximp$  
		whose premises are the conclusions $\dD_1, \ldots, \dD_n$;

		\item 
		if $T= \lambda x_1^{A_1}\cdots\lambda x_n^{A_n}.M $ for a $M\neq \lambda y.N$,
		then $A=(A_1,\ldots, A_n)\imp C$ for some types $A_1,\ldots, A_n,C$.
		Applying $n$ times the rule $\impintro$ we know that 
		$\Gamma \vdash  \lambda x^{A_1}\ldots\lambda x^{A_n}.M : (A_1,\ldots, A_n)\imp C$ 
		iff 
		$\Gamma, x : A_1, \ldots, x_n : A_n \vdash M : C$.
		By induction, we know that there is a unique derivation $\dD_1$ of the sequent $\Gamma, x : A_1, \ldots, x_n : A_n \vdash M : C$ in $\FCK$.
		We conclude since we have a unique derivation of $\Gamma \vdash M :  A$ 
		starting with a by $\mlimp$  
		whose premise is the conclusion of $\dD_1$;

		\item 
		if $T= \lambdabox{M}{N_1,\ldots ,N_n}{y_1,\ldots,y_n} $,
		then, we have two cases:
		
		\begin{itemize}
			\item  
			$N_i=x_i$ is a variable for all $i\in\intset1n$. 
			In this case a $\NDCK$ derivation of $\Gamma \vdash x_i : \lbox A_i$ can only be made of a single $\Idrule$-rule. 
			This implies that 
			$x_i : \lbox A_i\in \Gamma$  for all $i\in\intset1n$;
			thus  
			$\Gamma= \Delta, x_1: \lbox A_1,\ldots x_n : \lbox  A_n$ 
			for a context $\Delta$.  
			Moreover we must have a $\NDCK$-derivation of 
			$x_1: A_1,\ldots, x_n : A_n \vdash M: C$ 
			for a $C$ is such that $\lbox C= A$;
			thus, since $\sizeof M<\sizeof T$, there is a unique derivation $\dD_1$ of this latter sequent in $\FCK$.  
			We conclude since we have a unique derivation of $\Gamma \vdash M :  A$ 
			starting with a by $\kbrule$  
			whose premise is the unique derivation of 
			$\Delta, x : A_1, \ldots, x_n : A_n \vdash M : C$ in $\FCK$;

			\item 
			there are some $i\in\intset1n$ such that $N_i$ is of the form $f_iT_{i,1}\cdots T_{i,k_i}$ with 
			$f_i: (A_{i,1}, \ldots,A_{i,k_i})\imp \lbox B_i$
			and 
			$T_{i,1},\ldots, T_{i,k_i}\in\normTerms$.
			For any $i\in\intset1n$ and $j\in\intset1{k_i}$ the sequent $\Gamma \vdash T_{i,j}:A_{i,j}$ is derivable in $\NDCK$, 
			then,
			since the weakening rule is admissible,
			also 
			$\Gamma, f_1: (A_{1,1}, \ldots,A_{1,k_1})\imp \lbox B_1,\ldots, f_n: (A_{n,1}, \ldots,A_{n,k_n})\imp \lbox B_n \vdash N_{i,j}: A_{ij}$ is derivable.
			Since  $\sizeof {N_{i,j}}<\sizeof M$,
			we can conclude by induction the existence of a unique derivation $\dD_{i,j}$  in $\FCK$ for this latter sequent.
			By similar argument, there is also a unique derivation for 
			$\Gamma, \Delta, x_1  :  \lbox  B_1, \ldots,  x_n  :  \lbox B_n \vdash \lambdabox{M}{x_1,\ldots,x_n, \myvec{z}}{y_1,\ldots , y_n,\myvec{w}}  :  \lbox C$
			allowing us to conclude the existence of a 
			unique derivation of $\Gamma \vdash M :  A$ 
			starting with a by $\kimp$  
			whose
			rightmost premise is
			the conclusion of $\dD'$
			and 
			whose other premises are the derivations $\dD_{i,j}$ with $i\in\intset1n$ and $j\in\intset1{k_i}$;
		\end{itemize}
	\end{itemize}
\end{proof}

\section{Game Semantics for $\CK$}\label{sec:games}

In this section we recall definitions and results on the winning innocent strategies for the logic $\CK$ defined in \cite{acc:cat:str:games}.
For this purpose, we first recall the construction extending Hyland-Ong arenas \cite{hylandPCF,Murawski:Ong} for intuitionistic propositional formulas  to represent formulas containing modalities, and then we recall the characterization of the winning innocent strategies representing proofs in $\CK$.
We conclude by proving the full-completeness result between for those strategies by showing a one-to-one correspondence between strategies for type assignments of terms in normal forms and their (unique) typing derivations in $\FCK$.

\subsection{Arenas with Modalities}
\def\root{sink\xspace}
\def\roots{sinks\xspace}

We recall the definition of arenas with modalities from \cite{acc:cat:str:games} extending the encoding of arenas from \cite{hyland:ong:00,ICP}.
For this purpose, we assume the reader familiar with the definition of 
\defn{two-color directed graph} (or \defn{\ddags} for short),
i.e., 
directed acyclic graphs with two disjoint sets of directed edges $\iedge$ and $\medge$ (details can be found in \cite{acc:cat:str:games,ICP}).

\begin{definition}\label{def:arena}
	The \defn{arena}
	of a formula $F$ is the \ddag $\arof F$
	with  vertices are labeled by elements in $\labelset =\atoms \cup \set{\lbox}$ inductively defined as follows:
	\begin{equation}\label{eq:translation}
		\adjustbox{max width=.9\textwidth}{$
			\begin{array}{c}
				\arof a = \singlevertex[a]
				\qquad
				\arof{A \imp B} = \arof A \gimp \arof B
				\qquad
				\arof{\lbox A}	= {\ssbox}\;  \gmod \arof A

			\end{array}
		$}
	\end{equation}
	where
	$\singlevertex a$ and $\ssbox$ denote the graphs consisting of a single vertex labeled by $a$ and $\lbox$ respectively,
	and
	where
	the binary operation $\gimp$ and $\gmod$ on \ddags
	are defined as follows:
%
\begin{equation*}\adjustbox{max width=\textwidth}{$
		\begin{array}{l@{\;=\;}l@{\;}l@{\:,\:}l@{\:,\:}l@{\;}r@{}l}
			\gG \gimp \gH
			&
			\tuple{
				&
				\vertices[\gG]\uplus \vertices[\gH]
				& 
				\iedge[\gG\uplus\gH] \cup
				\RtoR\gG\gH
				&
				\medge[\gG\uplus\gH]
				&
			}
		\end{array}
		\quand
		\begin{array}{l@{\;=\;}l@{\;}l@{\:,\:}l@{\:,\:}l@{\;}r@{}l}
			\gG \gmod \gH
			&
			\tuple{
				&
				\vertices[\gG]\uplus \vertices[\gH]
				&
				\iedge[\gG\uplus\gH]
				&
				\medge[\gG\uplus\gH]\cup
				\RtoR\gG\gH
				&
			}
		\end{array}
	\quad\mbox{with}
	$}
\end{equation*}
\begin{equation*}\adjustbox{max width=\textwidth}{$
	\begin{array}{c@{\;=\;}l}
		\vertices[\gG]\uplus \vertices[\gH]
		&
		\Set{(v_i,i)\mid i\in\set{0,1} \mbox{ and } v_0\in \vertices[\gG] \mbox{ and } v_1\in \vertices[\gH]} 
		\quand
		\lab{(v_i,i)}=\lab{v_i}
		\\
		\dedge[\gG\uplus \gH]
		&
		\Set{\left((v_i,i),(w_i,i)\right)\mid i\in\set{0,1} \mbox{ and } (v_0,w_0)\in\dedge[\gG] \mbox{ and } (v_1,w_1)\in\dedge[\gH] } 
		\quad
		\mbox{for each } \dedge\in\set{\iedge,\medge}
		\\
		\RtoR{\gG}{\gH}
		&
		\Setof{((v,0),(w,1))}{v\in \irof[\gG],w\in \irof[\gH]}
		\quad \mbox{where}\quad 
		\irof[X]\coloneqq\set{v\in\vertices[X]\mid v\iedge[X] w \mbox{ for no }w\in\vertices[X]}
	\end{array}
$}\end{equation*}
%
	The \defn{arena of a sequent}
	$A_1,\ldots,A_n\vdash C$ 
	is the arena $\are$ of $\arof{ (A_1,\ldots,A_n) \imp C}$.
\end{definition}

\begin{remark}\label{rem:roots}
	By construction, an arena $\gG$ of a formula or a sequent $\Gamma \vdash C$
	always admits a unique non $\lbox$-labeled vertex in $\irof[\gG]$,
	i.e., 
	a unique vertex  $v$ with $\lab v \neq\lbox$ such that there is no $w\in\vertices[\gG]$ such that $v\iedge[\gG]w$.
\end{remark}

We draw \ddags by representing a vertex $v$ by its label $\lab v$.
If $v$ and $w$ are vertices of an \ddag, then we draw 
$\vvv1 \quad \vw1 \Aedges{vv1/w1}$ if $v\iedge w$ and 
$\vvv1 \quad \vw1 \Medges{vv1/w1}$ if $v\medge w$. By means of example, consider the arena below.
\begin{equation}\label{eq:arena}
	\arof{\big(a\imp\lbox (b\imp (c\imp \lbox d))\big)\imp \lbox (e\imp f)}
	\quad=\quad
	\def\myskip{\hskip1em}
	\begin{array}{c@{\quad\myskip}c@{\myskip}c@{\myskip}c@{\myskip}c@{\myskip}c@{\myskip}c@{\myskip}c}
		\va1	&\vlbox0&		&		&		&	&	&\vlbox2	\\[2pt]
		&		&		&		&		&\vlbox1& 	& 				\\[2pt]
		&		&\vb1	&		&		&		& \ve1	&\vf1 		\\[2pt]
		&		&		&\vc1	&		&\vd1	& 
	\end{array}
	\Medges{lbox0/lbox1,lbox2/f1,lbox0/d1,lbox1/d1}
	\multiAedges{c1}{d1,lbox1}
	\multiAedges{a1}{lbox1,lbox0,d1}
	\multiAedges{b1}{lbox1,d1}
	\multiAedges{lbox1,d1,lbox0}{lbox2,f1}
	\Aedges{e1/f1}
\end{equation}

\begin{remark}
	All arenas of the form $\arof{ (A_{\sigma(1)},\ldots,A_{\sigma(n)}) \imp C}$ have the same representation for any $\sigma$ permutation over $\intset1n$.
	More in general, it can be shown that the arena of any two equivalent formulas modulo Currying $A\imp (B\imp C)\sim B\imp(A\imp C)$ can be depicted by the same arena.
	However,
	whenever there may be ambiguity because of the presence of two vertices with the same label,
	we may represent the vertex 
	$v=((\cdots(v',i_1) \cdots ),i_n)$
	(where $i_1,\ldots, i_n\in\set{0,1}$) 
	by 
	$\lab v_{i_1, \ldots, i_n}$ instead of simply $\lab v=\lab{v'}$ (see \Cref{ex:arena}).
\end{remark}

\begin{definition}
	Let $\arof F$ be an arena and $v$ one of its vertices.
	The \defn{depth} of $v$ is the number $\dep{v}$ of vertices in a $\iedge$-path from $v$ to a vertex in $\irof[\arof F]$
	\footnote{
		As proven in \cite{ICP,acc:str:AIML22}, 
		arenas are \emph{stratified}, that is, all the $\iedge$-path from a vertex $v$ to any vertex in $\irof[\arof F]$ have the same length.
		Therefore the number $\dep v$ is well-defined.
	}%
	.
	The \defn{address} of $v$ is defined as the unique sequence of modal vertices $\add v=m_1,\dots , m_h$ in $\vertices[\arof F]$ 
	corresponding to the sequence of modalities in the path in the formula tree of $F$ connecting the node of $v$ to the root.
	If $\add v=m_1,\dots , m_h$, 
	we denote by $\addi k {v}=m_k$ its $k^{\mbox{\small th}}$ element
	and
	we call the \emph{height} of $v$ (denoted $\addsize{v}$) the number of elements in $\add v$. 
\end{definition}

\newvertex{lboxz}{\lbox_{011110}}{}
\newvertex{lboxp}{\lbox_{010}}{}
\newvertex{lboxs}{\lbox_{10}}{}
\begin{example}\label{ex:arena}
	Below 
	an alternative representation of its arena of the formula 
	$\big(a\imp\lbox (b\imp (c\imp \lbox d))\big)\imp \lbox (e\imp f)$ in
	\Cref{eq:arena}
	where the ambiguity of the vertex representation is avoided by the use of indices, 
	the corresponding formula-tree,
	and the complete list of the addresses of all vertices in this arena.
	\begin{equation*}
		\adjustbox{max width=\textwidth}{$
			\def\myskip{\hskip1em}
			\newvertex{land}{\land}{}
			\newvertex{imp}{\imp}{}
			\newvertex{unit}{\unit}{}
			\begin{array}{c|c|c}
				\begin{array}{c@{\quad\myskip}c@{\myskip}c@{\myskip}c@{\myskip}c@{\myskip}c@{\myskip}c@{\myskip}c}
					\va1	&\vlboxz0&		&		&		&	&	&\vlboxs2	\\[2pt]
					&		&		&		&		&\vlboxp1& 	& 				\\[2pt]
					&		&\vb1	&		&		&		& \ve1	&\vf1 		\\[2pt]
					&		&		&\vc1	&		&\vd1	& 
				\end{array}
				\Medges{lboxz0/lboxp1,lboxs2/f1,lboxz0/d1,lboxp1/d1}
				\multiAedges{c1}{d1,lboxp1}
				\multiAedges{a1}{lboxp1,lboxz0,d1}
				\multiAedges{b1}{lboxp1,d1}
				\multiAedges{lboxp1,d1,lboxz0}{lboxs2,f1}
				\Aedges{e1/f1}
				&
				\begin{array}{c@{\quad}c@{\quad}c@{\quad}c@{\quad}c@{\quad}c@{\;}c@{\quad}c@{\quad}c@{\quad}c@{\quad}c@{\quad}c@{\quad}c}
					&&			&		\vimp0										\\
					&	\vimp2	&&			&  		&		&	\vlboxs1			\\
					\va1	&		&	\vlboxz0	&&		&		&		&\vimp1\;	 		\\[10pt]
					&&\vimp7&		&		&		&\ve1	&		&\vf1				\\
					&\vb1	&		&\vimp3	&			\\
					&&\vc1	&		&\vlboxp2			\\[10pt]
					&&		&		&\vd1\;
				\end{array}
				\Dedges{lboxs1/imp0,imp1/lboxs1,e1/imp1,f1/imp1,imp2/imp0,a1/imp2,lboxz0/imp2,imp7/lboxz0,b1/imp7,imp3/imp7,c1/imp3,lboxp2/imp3,d1/lboxp2}
				&
				\begin{array}{l@{=}l}
					\add a				&\emptyseq			\\
					\add {{\vlboxz0}}	&\emptyseq			\\
					\add b				&{\vlboxz0}			\\
					\add c 				&{\vlboxz0}			\\
					\add{\vlboxp1}		&{\vlboxz0}			\\
					\add d				&{\vlboxz0}\vlboxp1	\\
					\add{\vlboxs2}		&\emptyseq			\\
					\add e				&\vlboxs2			\\
					\add f				&\vlboxs2			
				\end{array}
			\end{array}
		$}
	\end{equation*}
\end{example}

\subsection{Games and Winning Innocent Strategies}

\newcommand{\seq}[1][]{\mathsf s_{#1}}
\def\gameof#1{\mathcal G_{#1}}
\newcommand{\move}[2][]{\mathsf m_{#2}^{#1}}

In this subsection, we briefly recall the definitions of games and winning strategies from \cite{acc:cat:str:games} required to make the paper self-contained.
Note that differently from the previous works, we here include the additional information of the \emph{pointer function} in the definition of views.
This information is crucial for the results in \Cref{sec:foc} where we provide a one-to-one correspondence between our winning strategies and  \mlts.


\begin{definition}
	Let $\are$ be an arena.
	We call a \defn{move} an occurrence of a vertex $v$ of $\are$ with $\lab v\neq \lbox$. 
	The \defn{polarity} of a move $v$ is the parity of $\dep v$:
	a move is a \defn{\evenmove} (resp.~a \defn{\oddmove}) if 
	$\dep v$ is even (resp.~odd). 

	A  \defn{pointed sequence} in $\are$ is a pair $\jpath=\pair\seq f$
	where 
	$\seq=\seq[0],\ldots, \seq[n]$ is a finite sequences of moves in $\are$
	and 
	a \defn{pointer function} $f\colon \intset1{n} \to \intset0{n-1}$ such that $f(i)<i $ and 
	$\seq[i] \iedge[\are] \seq[f(i)] $.
	The \defn{length} of $\jpath$ (denoted $\sizeof{\jpath}$) is defined as the length of $\seq$, that is, $\sizeof\jpath=n+1$.
	Note that we also use $\emptyseq$ to denote the \defn{empty pointed sequence} $\tuple{\emptyseq,\emptyset}$.
\end{definition}

\begin{remark}
	It follows by definition of view that the player $\evensym$ (resp.~$\oddsym$) can only play vertices whose $\dep v$ is even (resp.~odd).
	For this reason, for each $v\in\vertices[\gG]$ we write $\iseven v$ (resp.~$\isodd v$) if the parity of $\dep v$ even (resp.~odd).
	
	Note that the parity of a modality in the address of a move may not be the same as the parity of the move itself.
	By means of example, consider 
	the vertex $c$ in \Cref{ex:arena} which belongs in the scope of two modalities $\vlboxz0$ and $\vlboxp1$ with odd parity.
\end{remark}

Given two pointed sequences 
$\jpath=\pair\seq f$ and $\jpath'= \pair{\seq'}{f'}$ in $\are$, 
we write $\jpath \sqsubseteq \jpath'$ whenever $\seq$ is a prefix of $\seq'$ (thus $\sizeof \seq \leq \sizeof {\seq'}$) and $f(i)=f'(i)$ for all $i\in\intset1{\sizeof{\jpath'}}$
and 
we say that $\jpath$ is a \defn{predecessor} of $\jpath'$ if
$\jpath \sqsubset \jpath'$ and $\sizeof\jpath=\sizeof{\jpath'}-1$.

\begin{definition}
	Let $\are$ be an arena.
	A \defn{play} on $\are$ is a pointed sequence $\jpath=\pair\seq f$ such that, 
	either $\seq=\emptyseq$, 
	or $\seq[i]$ and $\seq[i+1]$ have opposite polarities for all $i\in\intset0{\sizeof{\jpath}-1}$. 

	The \defn{game of $\are$} (denoted $\gameof\are$) is the set of prefix-closed sets of plays over $\are$.

	A \defn{view} is a play $\jpath=\pair\seq f$ such that either $\jpath=\emptyseq$ or the following properties hold:
	
	\hskip1em\begin{tabular}{@{- }l@{ : }l}
		$\jpath$ is \defn{$\evensym$-shortsighted}
		&
		$f(2k)=2k-1$ 
		for every 
		$2k\in\intset2{\sizeof\jpath}$;
		\\
		$\jpath$ is \defn{$\oddsym$-uniform}
		&
		$\lab{\seq[2k+1]}=\lab{\seq[2k]}$ 
		for every 
		$2k+1\in\intset0{\sizeof{\jpath}}$.
	\end{tabular}	

\noindent
	
%
%
%
%

	A \defn{winning innocent strategy} (or \WIS for short) for the game $\gameof\are$ is a finite non-empty prefix-closed set $\strat$ of views in $\gameof\are$ such that:
	
	\hskip1em\begin{tabular}{ll}
		- $\strat$ is \defn{$\evensym$-complete}:
		&
		if $\view\in \strat$ and $\view$ as odd length,
		\\&
		then every successor  of $\view$ (in $\gameof\are$) is also in $\strat$ ;
		\\
		- $\jpath$ is \defn{$\oddsym$-total}:
		&
		if $\view\in \strat$ and $\view$ has even length,
		\\&
		then exactly one successor of $\view$ (in $\gameof\are$) is in $\strat$  ;
	\end{tabular}
	
\noindent A view is \defn{maximal} in $\strat$ if it is not prefix of any other view in $\strat$. $\strat$ is \defn{trivial} 
if  $\strat=\set\emptyseq$. We  say that $\strat$ is a \WIS for a sequent $A_1,\ldots, A_n \vdash C$ 
if $\strat$ is a \WIS for $\arof{A_1,\ldots, A_n \vdash C}$.
	
\end{definition}

The definition of \WIS above is a reformulation of the one in the literature of game semantics for intuitionistic propositional logic~\cite{hylandPCF,dan:heb:reg:game,ICP}.
In presence of modalities, this definition requires to be refined to guarantee the possibility of gather modalities in \emph{batches} corresponding to the modalities introduced by a single application of the $\kbrule$ (see \Cref{fig:seqCK}).
By means of example, consider the following arenas and their corresponding \WISs, which cannot represent valid proofs in $\CK$ because of the impossibility of applying rules handling the modalities in a correct way.

\begin{example}\label{example:2}
	Consider the formulas 
	$F_1=(\lbox a)\imp      a$
	and
	$F_2=(\lbox a \imp \lbox b) \imp \lbox (a \imp  b)$
	and their arenas in \Cref{fig:exampleWIS}.
	The set of views
	$\strat_1$
	and
	$\strat_2$
	are \WISs for $F_1$ and $F_2$ respectively.
	However,
	these formulas are not provable in $\SCK$ because the proof search fails (see  \Cref{fig:exampleWIS}).
	In particular, in the first case, no $\kbrule$ can be applied because only there is a mismatch between the modalities on the left-hand side and on the right-hand side of the sequent;
	in the second case the problem is more subtle and, intuitively, is related to the fact that each $\kbrule$ can remove only a single $\iseven \lbox$ at a time, corresponding to the modality of the unique formula on the right-hand side of the sequent.
\end{example}	

\begin{figure}[t]
	\centering
	\adjustbox{max width = .8\textwidth}{$
	\begin{array}{l|c|c}
		\mbox{Arena}
		&\quad
		\arof{(\lbox a)\imp      a}=
		\begin{array}{c@{\qquad}c}
			\isodd{\vlbox1} 	&  					\\[10pt]
			\isodd{\va1}		& \iseven{\va0}
		\end{array}
		\Aedges{a1/a0,lbox1/a0}
		\Medges{lbox1/a1}
		\quad&\quad
		\arof{(\lbox a \imp \lbox b) \imp \lbox (a \imp  b)}=
		\begin{array}{l@{\qquad}l@{\quad}l@{\quad}l}
			\iseven{\vlbox2_{000}}& \isodd{\vlbox1_{010}} 	&				& \iseven{\vlbox0_{10}} 	\\[10pt]
			&\isodd{\vb1}		&				& \iseven{\vb0}						\\
			\iseven{\va2} 	& 					&\isodd{\va1}	&
		\end{array}
		\multiAedges{lbox1,b1}{b0,lbox0}
		\multiAedges{lbox2,a2}{b1,lbox1}
		\Medges{lbox1/b1,lbox2/a2,lbox0/b0}
		\Aedges{a1/b0}
		\\\hline&&\\
		\mbox{\WIS}
		&
		\strat_1=\Set{\emptyseq,\iseven a, \iseven a \isodd a}
		&
		\strat_2=\Set{\emptyseq,\iseven b, \iseven b \isodd b, \iseven b \isodd b\iseven a, \iseven b\isodd b\iseven a \isodd a }
		\\&&
		\\\hline
		\mbox{\begin{tabular}{c}(failed) \\ Derivation\end{tabular}}
		&
		{\small\vlderivation{\vlin{\rimprule}{}{\vdash \lbox a \imp a }{\vlid{}{}{\lbox a \vdash a}{\vlhy{FAIL}}}}}
		&
		{\small\vlderivation{
			\vlin{\rimprule}{}{\vdash (\iseven{\lbox} a \imp \isodd{\lbox} b) \imp \iseven{\lbox} (a \imp  b)}{
				\vliin{\limprule}{}{\iseven\lbox a \imp \isodd\lbox b \vdash  \iseven\lbox (a \imp  b)}{
					\vlin{\kbrule}{}{\vdash \iseven\lbox a}{\vlid{}{}{\vdash a}{\vlhy{FAIL}}} 
				}{
					\vlin{\kbrule}{}{\isodd\lbox b \vdash \iseven\lbox (a\imp b)}{
						\vlin{\limprule}{}{b \vdash a\imp b}{
							\vlin{\Wrule}{}{b, a\vdash b}{
								\vlin{\AXrule}{}{b\vdash b}{\vlhy{}}
							}
						}
					}
				}
			}
		}}
	\end{array}
	$}
	\caption{Examples of \WISs for arenas not corresponding to proofs.}
	\label{fig:exampleWIS}
\end{figure}

Therefore, in order to capture provability in $\CK$, the notion of winning strategies has to be refined as follows.

\begin{definition}\label{def:CKbatched}
	Let  $\view=(\seq, f)$ be a view in a strategy $\strat$  on an arena $\are$,
	and let $\addsize{\view}=1+\max\set{\addsize{v}\mid v\in\view}$.
	We define the \defn{batched view} of $\view$ as the $\addsize{\view}\times n$  
	matrix 
	$\fviewof{\view}=\big( \fviewof{\view}_{0}, \dots, 	\fviewof{\view}_{n}\big)$ 
	with elements in $\vertices[\gG]\cup\set{\emptyseq}$ such that  
	the 
	each column $\fviewof{\view}_{i}$ is defined as follows:
	\begin{equation*}
			\fviewof{\view}_{i}
			=
			\begin{pmatrix}
				\fviewof{\view}^{\addsize{\view }}_{i}	
				\\
				\vdots
				\\		
				\fviewof{\view}^0_{i}
			\end{pmatrix}
	\qquad\mbox{where}\qquad
	\begin{cases}
		\fviewof{\view}^{\addsize{\view }}_{i}
		=
		\addi{\addsize{\viewi{i}}}{\viewi {i}}
		,\ldots,
		\fviewof{\view}^{h_{\view}-h_{\viewi i}+1}_{i}
		=
		\addi{1}{\viewi {i}} 
		\\
		\fviewof{\view}^{h_{\view}-h_{\viewi i}}_{i}
		=
		\emptyseq
		,\ldots,
		\fviewof{\view}^1_{i}
		=
		\emptyseq
		\\
		\fviewof{\view}^0_{i}
		=
		\viewi i
	\end{cases}
	\end{equation*}
	We say that 
	$\view$ is \defn{well-batched}
	if
	$\sizeof{\add{\seq[{2k}]}}=\sizeof{\add{\seq[{2k+1}]}}$ 
	for every 
	$2k\in \intset{0}{\sizeof\jpath-1}$.
	Each well-batched view $\view$ induces an equivalence relation $\strateq{\gG}{\view}$ over $\vertices[\gG]$ generated by: 
	%
	\begin{equation}
		u\strateqone{\gG}{\view}w
		\quad \mbox{ iff }
		\quad
		\mbox{$u=\fviewof{\view}^h_{2k}$ and $w=\fviewof{\view}^h_{2k+1}$ for a $2k < n-1$ and a $h\leq{\addsize{\view}}$}
	\end{equation}
	A \WIS $\strat$ is \defn{linked} if 
	it contains only well-batched views 
	and
	if for every $\view\in\strat$ the $\strateq{\gG}{\view}$-classes 
	are of the shape $\set{\isodd v_1, \dots, \isodd v_n, \iseven w}$.

	A \defn{$\CK$-winning innocent strategy} (or \CKWIS for short) is a linked
	\WIS $\strat$.
	\footnote{
		We here provide a simpler definition of \CKWISs w.r.t. the one in \cite{acc:cat:str:games}.
		In fact, we are able here to simplify this definition because we are considering the $\ldia$-free fragment of $\CK$.
	}
\end{definition}

\begin{example}
	Consider the arenas in \Cref{fig:exampleWIS}.
	The batched view of the (unique) maximal views in $\strat_1$ and $\strat_2$ are 
	$\begin{pmatrix}\emptyaddress&\isodd\lbox\\\iseven a&\isodd a\end{pmatrix}$
	and
	$\begin{pmatrix}
		\iseven \lbox_{10} & \isodd \lbox_{010}&\iseven \lbox_{000}&\iseven \lbox_{10} 
		\\ 
		\iseven b& \isodd b & \iseven a & \isodd a 
	\end{pmatrix}$
	respectively.
	The first is not well-batched because $\iseven a$ has height $0$ while $\isodd a$ has height $1$, 
	while 
	the second, even if well-batched, is not linked because
	the $\strateq{\gG}{\view}$-class 
	$\set{\iseven \lbox_{10} , \isodd \lbox_{010},\iseven \lbox_{000} }$
	contains two $\iseven\lbox$.
\end{example}

The definition of \CKWISs allows us to obtain a full-completeness result with respect to $\CK$ which, together with the good compositionality properties of \CKWISs shown in \cite{acc:cat:str:games,catta:phd}, provides a full-complete denotational semantics for the logic $\CK$.
That is, every given \CKWIS is the encoding of a derivation in $\CK$, and if a derivation $\dD$ reduces via cut-elimination to a derivation $\dD'$, then they are encoded by the same \CKWIS.

\begin{theorem}[\cite{acc:cat:str:games}]
	The set of \CKWISs is a full-complete denotational model for $\CK$.
\end{theorem}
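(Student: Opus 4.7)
The plan is to prove full-completeness by establishing a bijective correspondence between cut-free derivations in $\SCK$ (modulo the proof equivalence generated by commutations of independent rules) and \CKWISs over the arena of the end-sequent. This amounts to exhibiting two mutually inverse translations together with an invariance argument.

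For the \emph{soundness} direction, I would proceed by induction on the structure of a cut-free derivation in $\SCK$. The $\taxrule$-rule induces the copycat strategy $\set{\emptyseq, \iseven a, \iseven a \isodd a}$; the $\rimprule$-rule leaves the strategy unchanged, as the arena $\arof{\Gamma \vdash A \imp C}$ coincides with $\arof{\Gamma, A \vdash C}$; the $\limprule$-rule is interpreted by parallel composition of the two premise-strategies, plugged on the shared root atom of the implication. The exchange, contraction and weakening rules are handled by the obvious manipulations of the arena mirrored on the strategy. The essential new case is $\kbrule$: given a strategy $\strat$ for $\arof{\Gamma \vdash A}$, every view in $\strat$ is lifted to the arena $\arof{\lbox\Gamma \vdash \lbox A}$ by prepending the freshly introduced modal vertices to the addresses of the moves. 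The well-batched condition follows immediately since all the $\lbox$s stripped off by this single $\kbrule$-step form exactly one column of $\fviewof{\cdot}$, and the resulting link class $\set{\isodd{v_1}, \ldots, \isodd{v_n}, \iseven{w}}$ is precisely the matching of the $\lbox$s peeled off from $\Gamma$ with the unique $\lbox$ peeled off from $A$.

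For the \emph{completeness} direction, given a \CKWIS $\strat$ over the arena of $\Gamma \vdash C$, we construct a derivation by analyzing the first $\evensym$-move $\iseven{w}$ of a maximal view, which must belong to a root of the arena of $C$ by Remark~\ref{rem:roots}. If $\iseven{w}$ lies beneath modal vertices, the linked $\isodd{\lbox}$-vertices in its batching class identify which $\lbox$-prefixed hypotheses of $\Gamma$ must be peeled off simultaneously, forcing an application of $\kbrule$ followed by recursion on the reduced strategy. If $\iseven{w}$ is an atom, the $\oddsym$-reply forced by $\oddsym$-uniformity corresponds to the copycat axiom, producing a $\taxrule$ application possibly preceded by $\rimprule$ and $\limprule$ steps along the spine of implications. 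Otherwise $\iseven{w}$ is the root of an implication $A \imp B$, and we recurse on the sub-strategies induced on the arena of $A$ and of $B$.

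The main obstacle will be proving that the two translations are mutually inverse modulo the intended proof equivalence, and in particular that the batching classes and the link structure of a \CKWIS align coherently with single $\kbrule$ applications in the derivation. Specifically, one needs to show that no link class can be faithfully distributed across several non-consecutive $\kbrule$ applications, and conversely that the reconstruction procedure does not merge independent $\kbrule$s. This requires a careful invariance argument showing that the link equivalence is preserved under every rule commutation constituting the proof equivalence, which should follow from the locality of each rule in the address structure of the arena and from the determinism of innocent strategies on $\evensym$-views.
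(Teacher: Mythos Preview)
The paper does not prove this theorem at all: it is stated with a citation to \cite{acc:cat:str:games} and invoked as an imported result. So there is no ``paper's own proof'' to compare against here; your proposal is effectively an attempt to reconstruct a proof that lives in the cited work.

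That said, two remarks are in order. First, the paper \emph{does} prove a closely related correspondence in Section~\ref{sec:games} (Lemmas~\ref{lem:WIStoFCK} and~\ref{lem:FCKtoWIS}), but its route is different from yours: rather than working directly with cut-free $\SCK$ derivations modulo rule permutations, it factors through the focused typing system $\FCK$ and the set $\normTerms$ of normal terms. This buys determinism for free, since $\FCK$ derivations are unique up to $\exrule$ (Theorem~\ref{thm:DNisFCK}), and thereby sidesteps the invariance argument you flag as the ``main obstacle''. Your direct $\SCK$-to-strategy approach is in principle workable, but the quotient by rule commutations you would need to manage is exactly what a focused system is designed to absorb.

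Second, your completeness sketch contains a small confusion about arenas: you write ``otherwise $\iseven{w}$ is the root of an implication $A \imp B$'', but arenas have no implication vertices---only atomic and modal ones (Definition~\ref{def:arena}). The implication structure is encoded purely in the $\iedge$-edges, so the case analysis on the first $\evensym$-move should be on whether the unique non-modal root of $\arof{C}$ (Remark~\ref{rem:roots}) has nonempty address (forcing $\kbrule$) or is an atom at depth zero (forcing the axiom/$\limprule$ spine). Your three-way split does not match the arena geometry and would need to be reorganized accordingly.
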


\subsection{Full Completeness for Modal Lambda Terms in Normal Form}
\newcommand{\sameclass}[1][]{{\color{linkcolor}\approx}_{#1}}
\newcommand{\notsameclass}[1][]{{\not\approx}_{#1}}

We can prove the full completeness result using the type system $\FCK$ and relying on \Cref{thm:DNisFCK}.
For this purpose, we have to extend the definition of $\alpha$-equivalence from terms to type assignments in order to avoid technicality in our proofs, since in arenas we keep no track of variable names.
For example, consider the $\alpha$-equivalent terms $\lambda x.x$ and $\lambda y.y$ whose derivation should be considered non-equivalent due to the fact that $\alpha$-equivalence does not extends to type assignments, therefore the two occurrence of the axiom rule with conclusion $x:a\vdash x:a$ and $y:a\vdash y:a$ should be considered distinct.%
\footnote{
	Note that another possible way to deal with this problem is to label non-modal vertices of arenas by pairs of propositional atoms and variables instead of propositional variables only.
}

\def\aconeq{=_{\alpha}^{\Gamma;C}}
\begin{definition}
	Let $A_1,\ldots,A_n\vdash C$ be a sequent.
	We define 
	$\mltset(\Gamma\vdash C)$ as the set of terms $M$ such that the typing derivation $x_1:A_1,\ldots,x_n:A_n\vdash M:C$ is derivable, that is,
	$$
	\mltset(\Gamma\vdash C)=\Set{M\in\mltset\mid x_1:A_1,\ldots, x_n:A_n \vdash M:C \mbox{ is derivable for some }x_1,\ldots, x_n }
	\; .
	$$

	If $M,N \in \mltset(\Gamma\vdash C)$,
	we define $M \aconeq N $ as the smallest equivalence relation generated by the 
	rule
	$
	\vlinf{}{\text{$z_i$ is fresh}}{
		M \aconeq N 
	}{
		M \subst {z_1, \dots, z_n}{x_1, \dots, x_n} = N \subst{z_1, \dots, z_n}{y_1, \dots, y_n}
	}
	$.
\end{definition} 

From now on, we consider derivations up the $\alpha$-equivalence defined above, that is, 
we consider derivations up to renaming of the variables occurring in a typing context.

\begin{figure}[t]
	\adjustbox{max width=\textwidth}{$\begin{array}{c}
			\Den{
				\vlinf{\taxrule}{}{\Gamma,x:\isodd c\vdash x:\iseven c}{}
			}
			=
			\set{\emptyseq,\iseven c,\iseven c\isodd c}
			\qquad
			\Den{\vlderivation{\vlin{\mlimp}{}{
						\Gamma\vdash \lambda x_1^{A_1}\cdots x_n^{A_n}. M :  (A_1,\ldots A_n) \imp C
					}{\vlpr{\dD'}{}{\Gamma ,x_1 :  A_1, \ldots,x_n :  A_n \vdash M :  C}}}}
			=
			\Den{\dD'}
			\\\\
			\Den{
				\vlinf{\aximp}{}{
					\Gamma, y :  
					(A_1,\ldots,  A_n)\imp c
					\vdash yN_1\cdots N_n  :  c 
				}{
					\left\{
					\vlderivation{\vlpr{\dD_i}{}{\Gamma,
							(A_1,\ldots,  A_n)\imp c
							\vdash N_i :  A_i}}
					\right\}_{i\in\intset1n}
				}
			}
			=
			\Set{\emptyseq,\iseven c, \iseven c\isodd c}\cup 
			\Set{\iseven c\isodd c\jpath \mid \emptyseq\neq\jpath \in \Den{\dD_i}\mbox{ for a } i\in\intset1n }
			\\\\
			\Den{
				\vlderivation{
					\vlin{\exrule}{}{\Gamma \vdash M:C}{
						\vlpr{\dD'}{}{\sigma(\Gamma)\vdash M:C}
					}
				}
			}
			=
			\Set{f_\sigma(\view)\mid \view \in \Den{\dD'}, f_\sigma \mbox{ 	
					isomorphism between $\arof{\Gamma \vdash M:C}$ and $\arof{\sigma(\Gamma)\vdash M:C}$}}
			\\
			\mbox{where $f_\sigma(\view)$ is the view obtained by applying $f_\sigma$ to each move in $\view$ (and updating its pointer accordingly)}
			\\\\
			\Den{
				\vlinf{\kbrule}{}{
					\Delta, y_1 :   \lbox A_1,\ldots, y_n  :  A_n \vdash \lambdabox{M}{x_1,\ldots, x_n}{y_1,\ldots,y_n}  :  \lbox  C
				}{
					\vlderivation{\vlpr{\dD'}{}{x_1:A_1,\ldots,x_n:A_n \vdash M:C}}
				}
			}
			=
			\Den{\dD'}
			\\\\
			\Den{
				\vliinf{\kimp}{}{
					\Gamma,
					\underbrace{f_1 :  (A_{1,1},\ldots, A_{1,k_1})\imp \lbox B_1
						,\ldots, 
						f_n  :  (A_{n,1},\ldots, A_{n,k_n}) \imp \lbox B_n}_\Delta
					\vdash 
					\lambdabox{M}{N_1,\ldots, N_n,\myvec{z}}{y_1,\ldots y_n,\myvec{w}}  :  \lbox C 
				}{
					\left\{
					\vlderivation{\vlpr{\dD_{i,j}}{}{\Gamma,\Delta \vdash T_{i,j}  :   A_{i,j}}}
					\right\}_{
						i\in\intset1n,
						j\in\intset1{k_i}
					}
				}{	
					\vlderivation{
						\vlpr{\dD_0}{}{
							\Gamma, \Delta, x_1  :  \lbox  B_1, \ldots,  x_n  :  \lbox B_n 
							\vdash 
							\lambdabox{M}{x_1,\ldots,x_n, \myvec{z}}{y_1,\ldots , y_n,\myvec{w}}  :  \lbox C
						}
					}
				}
			}
			\\
			=
			\\
			\Den{\dD_0}
			\cup
			\left(
			\bigcup_{i\in\intset1n}
			\Set{\iseven c\isodd b_i \jpath \mid \emptyseq\neq\jpath\in\Den{\dD_{i,j}} \mbox{ for a } j\in\intset1{k_i}}
			\right)
			\\
			\mbox{where
				$\iseven c$ 
				(resp.~$\isodd b_i$)
				is the unique non-$\lbox$ vertex in $\irof[{\arof{\lbox C}}]$ (resp.~in $\arof{\lbox B_i}$).
			} 
		\end{array}$}
	\caption{
		Rules to construct a \CKWIS from a type derivation in $\FCK$. 
		For reasons of readability, we assume there is an implicit map identifying the moves in the arenas of the \as in the premises with the moves in the arena of the \as in the conclusion.
		Note that $\iseven c$ and $\isodd c$ are occurrences of the same atom $c$, but we have decorate them to improve readability.
	}
	\label{fig:deseq}
\end{figure}

\begin{restatable}{lemma}{lemWIStoFCK}\label{lem:WIStoFCK}
	Let 
	$\strat$ be a non-trivial \CKWIS over the arena $\arof{\Gamma\vdash C}$.
	Then there is a canonically defined $T_\strat\in\normTerms\cap \mltset(\Gamma \vdash C)$
	admitting a unique typing derivation in $\FCK$.
\end{restatable}
\begin{proof}
	We define a term $T_\strat$ and a derivation $\dD_\strat(T)$ in $\FCK$ of the \as $\Gamma \vdash T_\strat: C$
	by induction on the lexicographic order over the pairs $(\sizeof\strat, \sizeof C)$:
	\begin{enumerate}
		
		\item\label{lem:seq:i}
		if $C=c$ is atomic, then $\strat$ must contain the \CKWIS $\set{\emptyseq, \iseven c, \iseven c\isodd c}$ because $\iseven c$ is the unique \root of $\arof{\Gamma \vdash C}$ and $\isodd c$ is the unique (by $\oddsym$-totality) \oddmove justified by the unique  previous \evenmove in $\strat$.
		Note that by the well-batched condition we must have $\add{\iseven c}=\add{\isodd c}=\emptyaddress$.
		Then
		
		\begin{enumerate}
			
			\item 
			either 
			$\strat=\set{\emptyseq,c,cc}$,
			then
			$\Gamma= \Delta, c$
			for a sequent $\Delta$ such that no move in $\Delta$ occurs in $\strat$ (because of $\evensym$-completeness).
			In this case 
			$T=x$
			and
			$$\dD_\strat(T)=\vlinf{\AXrule}{}{\Delta, x:c\vdash x:c}{}$$
			
			\item\label{lem:seq:i.b}
			or,
			since $\strat$ is prefix-closed and well-batched,			
			$\strat= \left(\set{\emptyseq,\iseven c,\iseven c\isodd c} \bigcup_{i=0}^{n} cc\strat_i\right)$
			for some \CKWISs $\strat_i$ 
			for a sequent $\Gamma \vdash A_i$ 
			for each $i\in\intset1n$.
			Then
			$\Gamma=\Delta, (A_0, \ldots , A_n) \imp c$.
			In this case $T=yN_1\cdots N_n$
			and
			
			\
			
			\hskip-2em\adjustbox{max width=.9\textwidth}{$
				\dD_\strat(T)
				=
				\vlderivation{
					\vliiin{\aximp}{}{
						\Delta,y:(A_1,\ldots,  A_n)\imp c\vdash c
					}{
						\vlpr{\dD_{\strat_1}(N_1)}{}{\Delta, y:(A_1,\ldots,  A_n)\imp c\vdash N_1:A_1}
					}{
						\vlhy{\cdots}
					}{
						\vlpr{\dD_{\strat_n} (N_n)}{}{\Delta, y:(A_1,\ldots,  A_n)\imp c\vdash N_n:A_n} 
					}
				}
				$}
		\end{enumerate}
		
		\item 
		if
		$C=(A_1, \ldots , A_n) \imp B$,
		then 
		$T=\lambda x_1\cdots\lambda x_n.T'$
		and
		$$
		\dD_\strat(T)
		=
		\vlderivation{
			\vlin{\mlimp}{}{
				\Gamma\vdash T : (A_1, \ldots , A_n) \imp B
			}{
				\vlpr{\dD_{\strat}(T')}{}{
					\Gamma ,x_1:A_1, \ldots, x_n:A_n \vdash T' : B}
			}
		} 
		$$
		
		\item 
		$C=\iseven\lbox A$ is a $\lbox$-formula, 
		then,
		if a 
		\root $v$ of $\arof \Gamma$ occurs as a move in $\strat$, then it mus be justified by a \root of $\arof{\lbox A}$. 
		Therefore, 
		by the well-batched condition,  
		$v$ must be in the scope of a $\lbox$.
		We have two cases:
		
		\begin{enumerate}
			\item 
			either
			$\Gamma =\Sigma, \lbox \Delta$
			and no move in $\Sigma$ occurs in $\strat$.
			In this case we have that $T=\lambdabox{M}{x_1,\ldots, x_n}{y_1,\ldots,y_n}$
			and
			$$
			\dD_\strat(T)
			=
			\vlderivation{\vlin{\kbrule}{}{
					\Delta, y_1 :   \lbox A_1,\ldots, y_n  :  A_n \vdash \lambdabox{M}{x_1,\ldots, x_n}{y_1,\ldots,y_n}  :  \lbox  C
				}{
					\vlpr{\dD_{\strat}(M)}{}{x_1 : A_1,\ldots,x_n : A_n \vdash M : C}
				}
			}
			$$

			\item
			or
			$\Gamma=
			\Delta,
			(A_{1,1},\ldots A_{1,k_1})\imp \lbox B_1
			,\ldots, 
			(A_{n,1},\ldots A_{n,k_n})\imp \lbox B_n
			$
			for 
			some $k_1,\ldots, k_n$ such that $k_1+\cdots +k_n>0$
			and 
			a sequent $\Delta$ such that 
			if $\isodd \lbox D$ or $(A_1, \ldots, A_n)\imp \isodd\lbox D$ is in $\Delta$, then 
			$\isodd\lbox \notsameclass[\strat] \iseven\lbox$.
			In this case,
			by similar reasoning of (\ref{lem:seq:i}.\ref{lem:seq:i.b}),
			there are
			for some \CKWISs $\strat_i$ for the sequent $\Gamma \vdash A_i$ 
			for each $i\in\intset1n$
			and 
			a \CKWIS $\strat_0$ for the sequent $\Gamma, \Delta, \lbox B_1, \ldots, \lbox B_n\vdash \lbox C$
			such that
			$\strat= \left(\strat_0\cup \left(\bigcup_{i=0}^{n} \sigma_i\strat_i\right)\right)$.
			Therefore 
			$T=\lambdabox{M}{x_1,\ldots,x_n, \myvec{z}}{y_1,\ldots , y_n,\myvec{w}}$
			and
			$\dD_\strat(T)$ is the following derivation
			
			\
			
			\hspace{-3em}
			\scalebox{1.02}{$
				\vlderivation{
					\vliin{\kimp}{}{
						\Gamma, 
						\Delta
						\vdash 
						\lambdabox{M}{N_1,\ldots, N_n,\myvec{z}}{y_1,\ldots y_n,\myvec{w}}  :  \lbox C
					}{
						\vlhy{\left\{
							\vlderivation{
								\vlpr{\dD_{\strat_{i,j}}(T_{i,j})}{}{
									\Gamma,\Delta \vdash T_{i,j}:A_{i,j}
								}
							}
							\right\}_{
								\scriptsize
								\begin{array}{l}
									i\in\intset1n,
									\\
									j\in\intset1{k_i}
								\end{array}
							}
						}
					}{	
						\vlhy{
							\vlderivation{
								\vlpr{\dD_{\strat_0}\left(\lambdabox{M}{x_1,\ldots,x_n, \myvec{z}}{y_1,\ldots , y_n,\myvec{w}}\right)}{}{
									\Gamma, \Delta, 
									\Delta'
									\vdash \lambdabox{M}{x_1,\ldots,x_n, \myvec{z}}{y_1,\ldots , y_n,\myvec{w}}  :  \lbox C}
							}
						}
					}
				}
				$}
			\
			
			where
			$\Delta=f_1 :  (A_{1,1},\ldots, A_{1,k_1})\imp \lbox B_1
			,\ldots, 
			f_n  :  (A_{n,1},\ldots, A_{n,k_n}) \imp \lbox B_n$
			and
			$\Delta'=x_1  :  \lbox  B_1, \ldots,  x_n  :  \lbox B_n $.
		\end{enumerate}
	\end{enumerate}
\end{proof}
\begin{theorem}
	There is a one-to-one correspondence between terms in  $\normTerms \cap \mltset(\Gamma\vdash C)$
	and
	\CKWIS for $\Gamma \vdash C$.
\end{theorem}
\begin{proof}
	\Cref{lem:WIStoFCK} ensures that one \CKWIS $\strat$ for $\Gamma \vdash C$,
	we can define a (unique) typing derivation $\dD_\strat$ in $\FCK$ 
	of a term $T_\strat\in \normTerms \cap \mltset(\Gamma\vdash C)$.
	
	Conversely, given a
	\as
	$\Gamma \vdash T: C$
	for a $T\in\normTerms$, 
	then, we can uniquely define is a derivation $\dD_T$ in $\FCK$.
	Thus,  by \Cref{thm:DNisFCK},  the \as $\Gamma \vdash T: C$ is derivable.
	We define $\strat_T$ as the \CKWIS defined by induction on the number of rules in $\dD_T$ using the rules in \Cref{fig:deseq}.
	
	We conclude since we have that $\strat_{T_\strat}=\strat$ and $T_{\strat_T}=T$ by definition.
\end{proof}

\section{Conclusion}\label{sec:out}

In this paper we introduced a new \mlcalc for the $\ldia$-free fragment of the constructive modal logic $\CK$ (without conjunction or disjunction).
This lambda calculus builds on the work in \cite{kakutanilambda},
by adding a restricted $\eta$-reduction as well as two new reduction rules dealing with the explicit substitution constructor used to model the modality $\lbox$.  
We proved normalization  and confluence for this calculus 
and 
we provide a one-to-one correspondence between the set of terms in normal form and the set of winning strategies for the logic $\CK$ introduced in \cite{acc:cat:str:games}. 

We foresee the possibility of extending the result presented in this paper to the entire disjunction-free fragment of $\CK$, for which winning strategies are already defined in \cite{acc:cat:str:games} are a fully complete denotational semantics.  For this purpose, we should consider additional term constructors for terms whose type is a conjunction, as well as a new $\letcon$-like operator to model terms whose type is the modality $\ldia$-formula similar to the one proposed in \cite{bel:deP:rit:extended}.
For this reason, 
in future works we plan to reformulate our lambda-calculus in the light of the novel line of research on calculi with explicit substitutions \cite{kes:explicit,kes:LMCS2009,acc:bon:kes:lom,acc:exponentials}.
This approach would allow us to simplify some of the technicalities and achieve a more elegant operational semantics. 
Another interesting prospective is to extend our approach to operational semantics to the Fitch-style \mlcalc studied in \cite{nachiappa}.

At the same time, we plan to make explicit that our game semantics provides a concrete model for the 
\emph{cartesian closed categories} provided with a \emph{strong monoidal endofunctor} \cite{bel:deP:rit:extended,Kavvos20}.
Indeed, 
categorical semantics of the calculus in \cite{bel:deP:rit:extended} is modeled by means of 
\emph{cartesian closed categories} equipped with a \emph{strong monoidal endofunctor} 
taking into account the proof-theoretical behavior of the $\lbox$-modality. 
We further conjecture that the syntactic category obtained via the quotient of modal terms modulo the relations we introduce in this paper
is indeed a \emph{free cartesian closed category} on a set of atoms with a \emph{strong monoidal endofunctor}.

%
\bibliographystyle{splncs04}
\bibliography{ref}

\end{document}